\theoremstyle{plain}
\newtheorem*{theorem*}{Theorem}
\newtheorem*{lemma*}{Lemma}
\newtheorem{theorem}{Theorem}
\newtheorem{lemma}{Lemma}
\newcommand{\norone}[1]{\left\|#1\right\|_1}
\newcommand{\norinf}[1]{\left\|#1\right\|_{\infty}}
\newcommand{\ZZ}{\mathbb{Z}}
\newcommand{\ZZgeq}{\mathbb{Z}_{\geq 0}}
\newcommand{\RR}{\mathbb{R}}
\newcommand{\RRgeq}{\mathbb{R}_{\geq 0}}
\newcommand{\BB}{B}
\title{About the Complexity of Two-Stage Stochastic IPs \thanks{This work was mostly done during the authors time at EPFL. The project was supported by the Swiss National Science Foundation (SNSF) within the project Convexity, geometry of numbers, and the complexity of integer programming (Nr.163071)}}
\author{Kim-Manuel Klein\\
University of Kiel\\
\url{kmk@informatik.uni-kiel.de}}
\begin{document}

\date{}
\maketitle

\begin{abstract}
We consider so called $2$-stage stochastic integer programs (IPs) and their generalized form of multi-stage stochastic IPs. A $2$-stage stochastic IP is an integer program of the form $\max \{ c^T x \mid Ax = b, l \leq x \leq u, x \in \ZZ^{nt + s} \}$ where the constraint matrix $A \in \ZZ^{r \times s}$ consists roughly of $n$ repetition of a block matrix $A$ on the vertical line and $n$ repetitions of a matrix $B \in \ZZ^{r \times t}$ on the diagonal.
   
In this paper we improve upon an algorithmic result by Hemmecke and Schultz form 2003 \cite{Hemmecke_two_stage_03} to solve $2$-stage stochastic IPs. The algorithm is based on the Graver augmentation framework where our main contribution is to give an explicit doubly exponential bound on the size of the augmenting steps. The previous bound for the size of the augmenting steps relied on non-constructive finiteness arguments from commutative algebra and therefore only an implicit bound was known that depends on parameters $r,s,t$ and $\Delta$, where $\Delta$ is the largest entry of the constraint matrix.
Our new improved bound however is obtained by a novel theorem which argues about the intersection of paths in a vector space.
   
As a result of our new bound we obtain an algorithm to solve $2$-stage stochastic IPs in time $poly(n,t) \cdot f(r,s,\Delta)$, where $f$ is a doubly exponential function.

    
To complement our result, we also prove a doubly exponential lower bound for the size of the augmenting steps.
\end{abstract}

\section{Introduction}
Integer programming is one of the most fundamental problems in algorithm theory. Many problems in combinatorial optimization and other areas can be modeled as an integer program . An \emph{integer program} (IP) is thereby of the form
\begin{align*}
    \max \{ c^T x \mid Ax = b, l \leq x \leq u, x \in \ZZ^n \}
\end{align*}
for some matrix $A \in \ZZ^{m \times n}$, a right hand side $b \in \ZZ^m$, a cost vector $c \in \ZZ^n$ and lower and upper bounds $l, u \in \ZZ^n$. 
The famous algorithm of Kannan \cite{kannan1987minkowski} computes an optimal solution of the IP in time of roughly $n^{O(n)} \cdot poly(m,\log \Delta)$, where $\Delta$ is the largest entry of $A$ and $b$. 

In recent years there was significant progress in the development of algorithms for IPs when the constraint matrix $A$ has a specific structure. Consider for example the class of integer programs with a constraint matrix $\mathcal{N}$ of the form
\begin{align*}
    \mathcal{N} = \begin{pmatrix}
A & A & \cdots & A \\
B & 0 & \cdots & 0\\
0 & B & \ddots &  \vdots \\
\vdots & \ddots & \ddots &  0 \\
0 & \cdots & 0 & B
\end{pmatrix} 
\end{align*}
for some block matrices $A \in \ZZ^{r \times s}$ and $B \in \ZZ^{r \times t}$. An IP of this specific structure is called an \emph{$n$-fold} IP. This class of IPs has found numerous applications in the area of string algorithms \cite{knop2017combinatorial}, social choice games \cite{koutecky2018IJCAI, KnopKM17-bribery} and scheduling \cite{jansen2018scheduling, knop2017scheduling}.
State-of-the-art algorithms compute a solution of an $n$-fold IP in time $poly(n,t) \Delta^{O(r^2s)}$ \cite{eisenbrand18, jansen2018near_linear, KouteckyLO18}, where $\Delta$ is the largest entry in matrices $A$ and $B$.

\subsection{Two-Stage Stochastic Integer Programming}
Stochastic programming deals with uncertainty of decision making over time \cite{kall1994stochastic}. One of the basic models in stochastic programming is $2$-stage stochastic programming. In this model one has to decide on a solution at the first stage and in the second stage there is an uncertainty where $n$ possible scenarios can happen. Each of $n$ possible scenarios might have a different optimal solution and the goal is to minimize the costs of the solution of the first stage in addition to the expected costs of the solution of the second stage. In the case that said scenarios can be modeled by an (integer) linear program, we are talking about \emph{$2$-stage stochastic (integer) linear programs}. $2$-stage stochastic linear programs that do not contain any integer variable are well understood (we refer to standard text books~\cite{birge2011introduction, kall1994stochastic}). In contrast, $2$-stage stochastic programs that contain integer variables are hard to solve and are topic of ongoing research. Typically, those IPs are investigated in the context of decomposition based methods (we refer to a tutorial \cite{kuccukyavuz2017introduction} or a survey \cite{twostage_survey} on the topic). For recent progress on $2$-stage stochastic programs we refer to \cite{twostage_branch_and_bound, caroe1998shaped, twostage_survey}.
The interest in solving $2$-stage stochastic (I)LPs efficiently stems from their wide range of applications for example in modeling manufacturing processes \cite{dempster1981analytical} or energy planing \cite{haneveld2001optimizing}.

In this paper we consider $2$-stage stochastic IPs with only integral variables. This so called pure integral $2$-stage stochastic IPs have also been considered in the literature from a practical perspective (see \cite{gade2014decomposition,zhang2014finitely}).
The considered IP is then of the form
\begin{align} \label{IP:2-stage}
    \max c^T x \\ \notag
    \mathcal{A}x = b\\ \notag
    l \leq x \leq u\\ \notag
    x \in \ZZ^{s + nt} \notag
\end{align}
for given objective vector $c \in \ZZgeq^{s + nt}$ upper and lower bound $\ell, u \in \ZZgeq^{s + nt}$.
The constraint matrix $\mathcal{A}$ has the shape
\begin{align*}
    \mathcal{A} = \begin{pmatrix}
A^{(1)} & B^{(1)} & 0 & \cdots & 0\\
A^{(2)} & 0 & B^{(2)} & \ddots &  \vdots \\
\vdots & \vdots & \ddots & \ddots &  0 \\
A^{(n)} & 0 & \cdots & 0 & B^{(n)}
\end{pmatrix} 
\end{align*}
for given block matrices $A^{(1)}, \ldots , A^{(n)} \in \ZZ^{r \times s}$ and $B^{(1)}, \ldots , B^{(n)} \in \ZZ^{r \times t}$.
Typically, $2$-stage stochastic IPs are written in a slightly different (equivalent) form that explicitly involves the scenarios and the probability distribution of the scenarios of the second stage.
In this presented form, roughly speaking, the solution for the first stage scenario is encoded in the variables corresponding to vertical block matrices. A solution for each of the second stage scenarios is encoded in the variables corresponding to one of the diagonal block matrices and the expectation for the second stage scenarios can be encoded in a linear objective function. 
Since we do not rely on known techniques of stochastic programming in this paper, we omit the technicalities surrounding $2$-stage stochastic IPs and simply refer to a survey for further details \cite{twostage_survey}.

Despite their similarity, it seems that $2$-stage IPs are significantly harder to solve than $n$-fold IPs. While it is known that the $2$-stage stochastic IP with constraint matrix $\mathcal{S}$ can be solved in running time of the form $poly(n) \cdot f(r,s,t, \Delta)$ for some computable function $f$, which was developed by Hemmecke and Schultz \cite{Hemmecke_two_stage_03}, the actual dependency on the parameters $r,s,t, \Delta$ was unknown (we elaborate on this further in the coming section). Their algorithm is based on the augmentation framework which we also discuss in the following section.

\subsection{The Augmentation Framework}
Suppose we have an initial feasible solution $z_0$ of an IP $\max \{ c^T x \mid Ax = b, l \leq x \leq u, x \in \ZZ^n \}$ and our goal is to find an optimal solution. The idea behind the augmentation framework (see \cite{loera2012algebraic}) is to compute an augmenting (integral) vector $y$ in the kernel, i.e. $y \in ker(A)$ with $c^T y > 0$. A new solution $z'$ with improved objective can then be defined by $z' = z_0 + \lambda y$ for a suitable $\lambda \in \ZZgeq$. This procedure can be iterated until a solution with optimal objective is obtained eventually.

We call an integer vector $y \in ker(A)$ a \emph{cycle}. A cycle can be decomposed if there exist integral vectors $u,v \in ker(A)\setminus \{ 0 \}$ with $y = u + v$ and $u_i \cdot v_i \geq 0$ for all $i$ (i.e. the vectors are sign-compatible with $y$). An integral vector $y \in ker(A)$ that can not be decomposed is called a \emph{Graver element} \cite{Graver1975foundations} or we simply say that it is \emph{indecomposable}. The set of all indecomposable elements is called the \emph{Graver basis}.

The power of the augmentation framework is based on the observation that the size of Graver elements and therefore also the size of the Graver basis can be bounded. With the help of these bounds, good augmenting steps can be computed by a dynamic program and finally the corresponding IP can be solved efficiently.

In the case that the constraint matrix has a very specific structure, one can sometimes show improved bounds. Specifically, if the constraint matrix $A$ has a $2$-stage stochastic shape with identical block matrices in the vertical and diagonal line, then Hemmecke and Schultz~\cite{Hemmecke_two_stage_03} were able to prove a bound for the size of Graver elements that only depends on the parameters $r,s,t$ and $\Delta$. The presented bound is an existential result and uses so called saturation results from commutative algebra. As MacLagan's theorem is used in the proof of the bound no explicit function can be derived. It is only known that the dependency on the parameters is lower bounded by ackerman's function \cite{ackermanian2009}. This implies that the implicit bound for the size of Graver elements by Hemmecke and Schultz can not be improved beyond an ackermanian dependency in the parameters $r,s,t$ and $\Delta$.

In a very recent paper it was even conjectured that an algorithm with an explicit bound on parameters $r,s,t$ and $\Delta$ in the running time to solve IPs of the form (\ref{IP:2-stage}) does not exist \cite{knop2018tight}.

Very recently, improved bounds for Graver elements of general matrices and matrices with specific structure like $n$-fold \cite{eisenbrand18} or $4$-block structure \cite{chen2018Graver} were developed. They are based on the Steinitz Lemma, which was previously also used by Eisenbrand and Weismantel \cite{eisenbrand2018proximity} in the context of integer programming.
\begin{lemma}[Steinitz \cite{grinberg1980value, steinitz1913bedingt}]
    Let $v_1, \ldots , v_n \in \RR^m$ be vectors with $\norinf{v_i} \leq \Delta$ for $1 \leq i \leq n$. Assuming that $\sum_{i=1}^n v_i = 0$ then there is a permutation $\Pi$ such that for each $k \in \{1, \ldots , n \}$ the norm of the partial sum $\norinf{\sum_{i=1}^k v_{\Pi(i)}}$ is bounded by  $m \Delta$
\end{lemma}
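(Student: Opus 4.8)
\section*{Proof proposal}

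The plan is to prove the lemma by building the permutation $\Pi$ greedily, one position at a time, while carrying along a fractional ``completion certificate'' that keeps every partial sum inside the cube $[-m\Delta,m\Delta]^m$. Equivalently, I would construct a nested chain $\emptyset = S_0 \subsetneq S_1 \subsetneq \cdots \subsetneq S_n = \{1,\dots,n\}$ with $|S_k| = k$, read $\Pi$ off from it (so that $\{\Pi(1),\dots,\Pi(k)\} = S_k$, making the $k$-th partial sum equal to $\sum_{i\in S_k} v_i$), and maintain the invariant $\norinf{\sum_{i\in S_k} v_i} \le m\Delta$ for all $k$. Throughout, the hypothesis $\sum_{i} v_i = 0$ is essential: it lets us rewrite $\sum_{i\in S_k} v_i = -\sum_{i\notin S_k} v_i$, so that controlling a partial sum is the same as controlling the sum of the not-yet-used vectors, and in particular it is what makes the ``full'' fractional completion (the all-ones vector) feasible at every stage.

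The workhorse is a basic-feasible-solution estimate. For any index set $T$ and target $b\in\RR^m$, the polytope $\{\lambda\in[0,1]^T : \sum_{i\in T}\lambda_i v_i = b\}$ is defined by only $m$ equalities, so each of its vertices has at most $m$ coordinates lying strictly in $(0,1)$; if one additionally imposes a cardinality equation $\sum_{i\in T}\lambda_i = \ell$ the count rises to $m+1$, but when $\ell = |T|-1$ this equation also forces at most one coordinate to equal $0$. In the greedy step I take such a vertex $\lambda^\star$ for an appropriately set up completion polytope over the unplaced index set (with target dictated by the current partial sum, up to the cardinality normalization); the combination of near-integrality with the ``at most one zero'' fact yields a coordinate $j$ with $\lambda^\star_j = 1$. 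Appending $j$ to the placed set and restricting $\lambda^\star$ to the surviving indices supplies the certificate for the next iteration, and the updated partial sum gets written as a $[0,1]$-combination of at most $m$ of the $v_i$ (exactly the coordinates on which the restricted $\lambda^\star$ is still fractional), hence has $\norinf{\cdot}\le m\Delta$, which is the invariant. The endgame, when only a handful of vectors remain, is disposed of directly by the triangle inequality, since a sum of at most $m$ of the $v_i$ already lies in $[-m\Delta,m\Delta]^m$.

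The difficulty is not the arithmetic of any single step but setting up the completion polytope so that three things hold simultaneously: (i) it is nonempty at every stage --- precisely where the zero-sum relation enters, guaranteeing a feasible fractional completion; (ii) its vertices always offer a $1$-coordinate that can be peeled off while the certificate stays valid for the next round; and (iii) the bound that emerges is exactly $m\Delta$ rather than $(m+1)\Delta$. Reconciling these is the heart of the Grinberg--Sevastyanov argument and requires a careful choice of the cardinality normalization together with an explicit check of the boundary cases; once the invariant is pinned down, the induction on $|S_k|$ is routine, the only non-formal ingredient being the vertex count $\le m$ combined with the zero-sum identity. A fallback route, should the bookkeeping prove stubborn, is Steinitz's original, more geometric induction on the dimension $m$: one drives a single coordinate of the running partial sum to zero and thereby reduces to an $(m-1)$-dimensional instance; this avoids the linear-programming argument but is less economical in the constant.
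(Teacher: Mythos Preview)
The paper does not prove the Steinitz Lemma; it merely states it with references to \cite{grinberg1980value, steinitz1913bedingt} and then uses it as a black box (in Theorem~\ref{thm:Graver_bound} and Lemmas~\ref{lem:intersectiontwo}, \ref{lem:intersection_all}). There is therefore no ``paper's own proof'' to compare your proposal against.

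That said, your sketch is the Grinberg--Sevastyanov linear-programming argument, which is exactly the source the paper cites for the sharp constant $m\Delta$, so in spirit you are reproducing the intended proof. The outline is correct: maintain a fractional completion certificate in a cube-constrained polytope with $m$ (or $m+1$) linear equalities, pass to a vertex to force all but $m$ (or $m+1$) coordinates to $\{0,1\}$, peel off a coordinate equal to $1$, and read off the $\norinf{\cdot}\le m\Delta$ bound from the at-most-$m$ fractional coordinates. Your own caveat is the right one: the delicate point is choosing the cardinality constraint so that (i) feasibility is preserved under the peel-off, (ii) a $1$-coordinate is guaranteed, and (iii) the constant comes out as $m\Delta$ rather than $(m+1)\Delta$. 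In the standard write-up this is handled by building the chain from the \emph{top} down (removing one index at a time from $[n]$) with the invariant that the certificate $\lambda$ on the current set $T$ satisfies $\sum_{i\in T}\lambda_i v_i = 0$ and $\sum_{i\in T}\lambda_i = |T|-m$; then $\sum_{i\in T} v_i = \sum_{i\in T}(1-\lambda_i)v_i$ is a $[0,1]$-combination of total mass $m$, giving the bound directly. Your forward formulation is equivalent but makes the ``at most one zero'' bookkeeping slightly more awkward; if you find it stubborn, switching to the backward chain resolves it cleanly.
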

The Steinitz Lemma was used by Eisenbrand, Hunkenschröder and Klein \cite{eisenbrand18} to bound the size of Graver elements for a given matrix $A$. As we use the following theorem and its technique in this paper, we give a brief sketch of its proof.
\begin{theorem}[Eisenbrand, Hunkenschröder, Klein \cite{eisenbrand18}] \label{thm:Graver_bound}
    Let $A \in \ZZ^{m \times n}$ be an integer matrix where every entry of $A$ is bounded by $\Delta$ in absolute value. Let $g \in \ZZ^n$ be an element of the Graver Basis of $A$ then $\norone{g} \leq (2m\Delta +1)^m$.
\end{theorem}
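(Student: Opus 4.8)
The plan is to combine the Steinitz Lemma with a pigeonhole argument on prefix sums, in the spirit of the proximity technique of Eisenbrand and Weismantel. Let $a_1, \dots, a_n \in \ZZ^m$ be the columns of $A$, so that $Ag = \sum_{i=1}^n g_i a_i = 0$. First I would ``unfold'' this identity into a list of $N := \norone{g}$ vectors in $\RR^m$: for each coordinate $i$ with $g_i \neq 0$, put $|g_i|$ copies of the vector $\operatorname{sign}(g_i)\, a_i$ into the list, remembering for each list entry which coordinate $i$ it came from. Each of the $N$ list vectors has $\ell_\infty$-norm at most $\Delta$, and they sum to $Ag = 0$, so the Steinitz Lemma (dimension $m$, bound $\Delta$) gives an ordering $w_1, \dots, w_N$ of the list all of whose prefix sums $p_k := \sum_{j=1}^k w_j$ satisfy $\norinf{p_k} \le m\Delta$.

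The main step is the pigeonhole. Every $p_k$ is an integer vector in the box $\{-m\Delta, \dots, m\Delta\}^m$, which contains exactly $(2m\Delta+1)^m$ lattice points. Suppose for contradiction that $N \ge (2m\Delta+1)^m + 1$. Then among the $N$ vectors $p_0 = 0, p_1, \dots, p_{N-1}$ two coincide, say $p_j = p_k$ with $0 \le j < k \le N-1$. Let $S := \{j+1,\dots,k\}$ and define $v \in \ZZ^n$ by letting $v_i$ be $\operatorname{sign}(g_i)$ times the number of indices $\ell \in S$ whose list entry $w_\ell$ was created from coordinate $i$. Then $Av = \sum_{\ell \in S} w_\ell = p_k - p_j = 0$, so $v \in ker(A)$, and setting $u := g - v$ we also get $u \in ker(A)$.

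It remains to check that $g = u+v$ is a genuine sign-compatible decomposition into nonzero vectors, contradicting that $g$ is a Graver element. The vector $v$ is nonzero because the block $w_{j+1}, \dots, w_k$ is nonempty ($j<k$), and $u$ is nonzero because the complementary index set $\{1,\dots,j\}\cup\{k+1,\dots,N\}$ is nonempty (it contains $N$, since $k \le N-1$). Both are sign-compatible with $g$: each list entry $w_\ell$ contributes, in its single associated coordinate $i$, a term of sign $\operatorname{sign}(g_i)$, so any signed subcount of list entries produces a vector whose $i$-th entry is either $0$ or of the same sign as $g_i$; in particular $0 \le v_i \le g_i$ or $g_i \le v_i \le 0$ coordinatewise, hence also for $u = g-v$. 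This contradiction shows $N = \norone{g} \le (2m\Delta+1)^m$.

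I expect the only delicate points to be bookkeeping rather than mathematics: one must apply the pigeonhole to $p_0,\dots,p_{N-1}$ (not to $p_0,\dots,p_N$) so that the ``outer'' part $u$ is forced to be nonzero, and one must observe that reordering the list by Steinitz cannot destroy sign-compatibility, because that property is coordinatewise and each list entry has a fixed sign in a fixed coordinate. Everything else — integrality of $u,v$, membership in $ker(A)$, and the pigeonhole count — is routine.
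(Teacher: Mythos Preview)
Your proof is correct and follows essentially the same approach as the paper: unfold $g$ into $\norone{g}$ signed column copies, apply Steinitz to bound all prefix sums by $m\Delta$ in $\ell_\infty$, and use pigeonhole on the integer box $\{-m\Delta,\dots,m\Delta\}^m$ to force a repeated prefix sum and hence a sign-compatible split of $g$. Your version is in fact a bit more careful than the paper's sketch---you make explicit the choice of $p_0,\dots,p_{N-1}$ so that both pieces $u$ and $v$ are guaranteed nonzero, a detail the paper glosses over.
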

\begin{proof}
    Consider the sequence of vectors $v_1, \ldots , v_{\norone{g}}$ consisting of $y_i$ copies of the $i$-th column of $A$ if $g_i$ is positive and $|g_i|$ copies of the negative of the $i$-th coplumn of $A$ if $g_i$ is negative. As $g$ is a Graver element we obtain that $v_1 + \ldots + v_{\norone{g}} = 0$.
    Using the Steinitz Lemma above, there exists a reordering $u_1 + \ldots + u_{\norone{g}}$ of the vectors such that the partial sum $p_k = \norinf{\sum_{i=1}^{k} u_i} \leq \Delta m$ for each $k \leq \norone{g}$.
    
    Suppose by contradiction that $\norone{g} > (2m\Delta +1)^m$. Then by the pigeonhole principle there exist two partial sums that sum up to the same value. However, this means that $g$ can be decomposed and hence can not be a Graver element.
\end{proof}

\subsection{Our Results:}
The main result of this paper is to prove a new structural lemma that enhances the toolset of the augmentation framework. We show that this Lemma can be directly used to obtain an explicit bound for Graver elements of the constraint matrix of $2$-stage stochastic IPs. But we think that it might also be of independent interest as it provides interesting structural insights in vector sets.
\begin{lemma}\label{lem:subrepresentation}
    Given multisets $T_1, \ldots , T_n \subset \ZZgeq^d$ where all elements $t \in T_i$ have bounded size $\norinf{t} \leq \Delta$. Assuming that the total sum of all elements in each set is equal, i.e.
    \begin{align*}
        \sum_{t \in T_1} t = \ldots = \sum_{t \in T_n} t
    \end{align*}
    then there exist nonempty submultisets $S_1 \subseteq T_1, \ldots , S_n \subseteq T_n$ of bounded size $|S_i| \leq (d \Delta)^{O(d (\Delta^{d^2}))}$ such that
    \begin{align*}
        \sum_{s \in S_1} s = \ldots = \sum_{s \in S_n} s.
    \end{align*}
\end{lemma}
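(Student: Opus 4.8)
\emph{(1) Reductions.} Delete every zero vector from all $T_i$; this changes no sum, and if some $T_i$ becomes empty then the common sum $\Sigma:=\sum_{t\in T_1}t$ is $0$, every $T_i$ consisted of zeros, and $S_i=\{0\}$ works. So assume no element is zero. If $\norinf{\Sigma}$ is below the asserted bound then, since every nonzero nonnegative element contributes at least $1$ to $\sum_j\Sigma_j$, each $T_i$ has $|T_i|\le\sum_j\Sigma_j\le d\norinf{\Sigma}$ elements and $S_i:=T_i$ works; hence assume $\norinf{\Sigma}$ is large. It now suffices to exhibit one vector $v\neq 0$ of norm bounded by the asserted quantity that is a submultiset-sum of every $T_i$: then $v=\sum_{s\in S_i}s$ with $S_i\subseteq T_i$ and $|S_i|\le\sum_j v_j\le d\norinf{v}$. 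Writing $\Sigma(T)$ for the set of submultiset-sums of $T$, the goal becomes: $\bigcap_i\Sigma(T_i)$ contains a small nonzero vector. Every element, hence every ``type'', lies in $\{0,\dots,\Delta\}^d$, so there are at most $N:=(\Delta+1)^d$ types; encode each $T_i$ by its multiplicity vector $m^{(i)}\in\ZZgeq^N$ and let $\phi:\ZZ^N\to\ZZ^d$ send a multiplicity vector to the corresponding sum, so $\phi(m^{(i)})=\Sigma$ for all $i$. If some type occurs in all $T_i$ it is itself a common submultiset-sum and we are done, so assume otherwise.

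\emph{(2) The two-set building block.} Let $T,T'$ have equal sum with all elements of $\norinf{\cdot}$-norm $\le\delta$. Form the partial-sum path of some ordering of $T$ (from $0$ to $\sum T$) and concatenate it with the reverse of such a path for $T'$: a closed walk with $|T|+|T'|$ steps of norm $\le\delta$ summing to $0$. By the Steinitz Lemma, reorder the steps so every partial sum has norm $\le d\delta$; the box $\{x:\norinf{x}\le d\delta\}$ has only $(2d\delta+1)^d$ lattice points, so among the first $(2d\delta+1)^d+1$ partial sums two coincide (if $|T|+|T'|$ is that large; otherwise take $S=T,S'=T'$). The steps strictly between them sum to $0$, use distinct elements, and split into submultisets $A\subseteq T$ (positive sign) and $B\subseteq T'$ (negative sign) with $\sum A=\sum B$, both nonempty, with $|A|+|B|\le(2d\delta+1)^d$. (Equivalently, split the difference of the two multiplicity vectors into positive and negative parts and decompose the resulting nonnegative element of $\ker(\phi\mid-\phi)$ into Graver elements via Theorem~\ref{thm:Graver_bound}.) Thus $\Sigma(T)\cap\Sigma(T')$ contains a nonzero vector of norm $\le\delta(2d\delta+1)^d$.

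\emph{(3) The general case by induction on $d$.} The base $d=1$ is the core: in each $T_i$ some value $a_i\in\{1,\dots,\Delta\}$ occurs at least $|T_i|/\Delta\ge\Sigma/\Delta^2$ times by pigeonhole, so once $\Sigma$ is large $T_i$ has at least $\operatorname{lcm}(1,\dots,\Delta)/a_i$ copies of $a_i$, whence $\operatorname{lcm}(1,\dots,\Delta)\in\Sigma(T_i)$ for every $i$ --- a common nonzero submultiset-sum realized by at most $\operatorname{lcm}(1,\dots,\Delta)$ elements. For $d>1$, project onto the first $d-1$ coordinates (discarding, for the projection, the elements whose first $d-1$ coordinates vanish); the projections still have a common sum. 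Apply the $(d-1)$-dimensional bound repeatedly --- each time to the projected remainder, which stays balanced because each peeled-off tuple is removed with the same first $d-1$ coordinates from every $T_i$ --- peeling a long sequence of pairwise disjoint tuples $(\tilde S_1^{(\ell)},\dots,\tilde S_n^{(\ell)})$, $\tilde S_i^{(\ell)}\subseteq T_i$, of bounded cardinality, with $\sum\tilde S_i^{(\ell)}$ sharing a common value $w_\ell$ of bounded norm on the first $d-1$ coordinates for each fixed $\ell$. Together with the ``pure last coordinate'' elements of each $T_i$, this reduces the problem of also matching the last coordinate to a one-dimensional instance of the lemma, with $\Delta$ replaced by the (bounded) block cardinality times $\Delta$, and solved as in the base case; choosing the appropriate blocks and last-coordinate correction simultaneously in all $T_i$ yields a common nonzero submultiset-sum of bounded size.

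\emph{(4) Main obstacle.} The delicate point is making the last step consistent: the one-dimensional match chooses, for each $i$, a subcollection of the peeled blocks, and different choices across $i$ change the first $d-1$ coordinates unless the selections are forced to be ``first-coordinate compatible''. Resolving this --- via a further pigeonhole forcing enough blocks to share a single value $w^\star$, together with a careful accounting of the remaining blocks and of the ``pure last coordinate'' elements so that the induced one-dimensional instance still has equal sums and the assembled selection is genuinely common --- is where the bookkeeping lives. Propagating the cardinality bound through the recursion on $d$ (each level replacing the entry bound by roughly its own output bound) is exactly what produces the doubly-exponential estimate $(d\Delta)^{O(d\Delta^{d^2})}$.
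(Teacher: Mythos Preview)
Your Steps (1) and (2) are fine and indeed mirror the paper's Lemma~\ref{lem:intersectiontwo}. The real content is Step~(3), and there your induction on $d$ has a genuine gap that you yourself flag in Step~(4) but do not actually close.

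After peeling, for each block $\ell$ the projected sums agree across $i$ (equal to some $w_\ell$), but the last coordinates $z_i^{(\ell)}$ need not. You then need a set $L$ of block indices with $\sum_{\ell\in L} z_i^{(\ell)}$ independent of $i$. This is \emph{not} an instance of the one-dimensional lemma: the multisets $\{z_i^{(\ell)}:\ell\}$ have no reason to share a common sum (your ``pure last coordinate'' elements cannot fix this, since their counts and values are unrelated across $i$). Equivalently you are asking for a zero-sum subset of the vectors $\bigl(z_2^{(\ell)}-z_1^{(\ell)},\ldots,z_n^{(\ell)}-z_1^{(\ell)}\bigr)\in\ZZ^{n-1}$; the natural Steinitz/pigeonhole argument for that lives in dimension $n-1$ and gives a bound depending on $n$, which is exactly what the lemma forbids. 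Restricting to blocks with a common value $w^\star$ does not help: you still face the same $n$-dependent zero-sum problem on the last coordinate.

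The paper avoids this entirely by a different idea: it never inducts on $d$. Instead it observes that every $\lambda^{(i)}$ is a feasible point of the single LP $\sum_{p\in P} x_p p = b$, hence a convex combination of its basic feasible solutions $x^{(1)},\ldots,x^{(\ell)}$ (with $\ell\le |P|^d\le \Delta^{d^2}$, independent of $n$). Pigeonhole then gives, for each $i$, some basis $B^{(j)}$ with $\tfrac{1}{\ell}x^{(j)}\le\lambda^{(i)}$. This collapses the problem from $n$ multisets to at most $\ell$ integer cones $int.cone(B^{(j)})$, and the paper's Lemma~\ref{lem:intersection_all} (a multi-cone Steinitz argument) produces a small common element of $\bigcap_j int.cone(B^{(j)})$. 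The passage through LP vertices is the missing ingredient that kills the dependence on $n$; your scheme has no analogue of it.
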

Note that this lemma only makes sense when we consider the $T_i$ to be multisets as the number of different sets without allowing multiplicity of vectors would be bounded by $2^{\Delta^d}$. 

A geometric interpretation of the lemma is given in the following figure. On the left side we have $n$-paths consisting of sets of vectors and all path end at the same point $b$.
\begin{center}
    \begin{tabular}{ccc}
    \begin{tikzpicture}[scale=0.6]
 \def\X{6}
 \def\Y{5}
 
 \draw[->] (0,0) -- (\X+1,0);
 
 \draw[->] (0,0) -- (0,\Y+1);

 \draw[->] (0,0) -- (1.4,0.2);
 \draw[->] (1.4,0.2) -- ++(1.6,0.8) node[below] {$T_1$};
 \draw[->] (3,1) -- ++(0.3,1.3);
 \draw[->] (3.3,2.3) -- ++(0.7,0.7);
 \draw[->] (4,3) -- ++(0.3,1.2);
 \draw[->] (4.3,4.2) -- ++(0.7,0.8) node[left, above] {$b$};

 \draw[->] (0,0) -- (0.2,1);
 \draw[->] (0.2,1) -- (1,1.5) node[left] {$T_3$};
 \draw[->] (1,1.5) -- ++(0.3,1.3);
 \draw[->] (1.3,2.8) -- (2.1,3.3);
 \draw[->] (2.1,3.3) -- ++(0.4,1.1);
 \draw[->] (2.5,4.4) -- ++(0.8,0.2);
 \draw[->] (3.3,4.6) -- ++(1,0.4);
 \draw[->] (4.3,5) -- (5,5);
 
 \draw[->] (0,0) -- ++(1,0.7);
 \draw[->] (1,0.7) -- ++(.6,0.7) node[right] {$T_2$};
 \draw[->] (1.6,1.4) -- ++(0.1,1);
 \draw[->] (1.7,2.4) -- ++(1,0.4);
 \draw[->] (2.7,2.8) -- ++(0.6,0.7);
 \draw[->] (3.3,3.5) -- ++(0.4,1.1);
 \draw[->] (3.7,4.6) -- ++(1.3,0.4);
\end{tikzpicture}
 & \begin{tikzpicture}[scale=0.6]
 \def\X{6}
 \def\Y{5}

 \draw[snake=snake,->] (-4,2) -- node[above] {permute} (-1.5,2);
 
 \draw[->] (0,0) -- (\X+1,0);
 
 \draw[->] (0,0) -- (0,\Y+1); 
 
 \draw[->] (0,0) -- (1.4,0.2);
 \draw[->] (1.4,0.2) -- ++(0.7,0.8) node[right] {$T_1$};
 \draw[->] (2.1,1) -- ++(0.3,0.8);
 \draw[->] (2.4,1.8) -- ++(0.3,1);
 \draw[->] (2.7,2.8) -- (4,3);
 \draw[->] (4,3) -- ++(0.3,1.2);
 \draw[->] (4.3,4.2) -- ++(0.7,0.8) node[left, above] {$b$};

  \draw[->] (0,0) -- ++(1,0.7);
 \draw[->] (1,0.7) -- ++(.6,0.7) node[right] {$T_2$};
 \draw[->] (1.6,1.4) -- ++(0.1,1);
 
 \draw[->] (1.7,2.4) -- ++(1,0.4) node[right] {$b'$}; 
 \draw[->] (2.7,2.8) -- ++(0.6,0.7);
 \draw[->] (3.3,3.5) -- ++(0.4,1.1);
 \draw[->] (3.7,4.6) -- ++(1.3,0.4);

 \draw[->] (0,0) -- (0.2,1);
 \draw[->] (0.2,1) -- (1,1.5) node[left] {$T_3$};
 
 \draw[->] (1,1.5) -- ++(1,0.4);
 \draw[->] (2 , 1.9) -- ++(0.7,0.8);

 \draw[->] (2.7,2.8) -- ++(0.4,1.1);
 \draw[->] (3.1,3.9) -- ++(0.8,0.2);
 \draw[->] (3.9,4.1) -- ++(0.8,0.5);
 \draw[->] (4.7,4.6) -- (5,5);

\end{tikzpicture}
\\
\end{tabular}
\end{center}
Then the Lemma shows, that there always exist permutations of the vectors of each path such that all paths meet at a point $b'$ of bounded size. The bound does only depend on $\Delta$ and the dimension $d$ and is thus independent of the number of paths $n$ and the size of $b$.
For the proof of the Lemma we need basic properties for the intersection of integer cones. We show that those properties can be obtained by using the Steinitz Lemma.

We show that Lemma \ref{lem:subrepresentation} has strong implications in the context of integer programming. Using the Lemma, we can show that the size of Graver elements of matrix $\mathcal{A}$ is bounded by $(r s \Delta)^{O(r s ((2 r \Delta +1)^{r s^2}))}$. Within the framework of Graver augmenting steps the bound implies that $2$-stage stochastic IPs can be solved in time $n^2t^2 \varphi \log^2 (nt)(r s \Delta)^{O(r s^2 ((2 r \Delta +1)^{r s^2}))}$, where $\varphi$ is the encoding length of the instance.
With this we improve upon an implicit bound for the size of the Graver elements matrix $2$-stage stochastic constraint matrices due to Hemmecke and Schultz \cite{Hemmecke_two_stage_03}. 

Furthermore, we show that our Lemma can also be applied to bound the size of Graver elements of constraint matrices that have a multi-stage stochastic structure. Multi-stage stochastic IPs are a well known generalization of $2$-stage stochastic IPs. By this, we improve upon a result of Aschenbrenner and Hemmecke \cite{hemmecke_multi-stage}.

To complement our results for the upper bound, we also present a lower bound for the size of Graver elements of matrices that have a $2$-stage stochastic IP structure. The given lower bound is for the case of $r=1$. In this case we present a matrix where the Graver elements have a size of $2^{\Omega(\Delta^s)}$.

\section{The Complexity of Two-Stage Stochastic IPs}\label{sec:two-stage}

First, we argue about the application of our main Lemma \ref{lem:subrepresentation}.
In the following we show that the infinity-norm of Graver elements of matrices with a $2$-stage stochastic structure can be bounded by using the lemma.

Given the block structure of the IP \ref{IP:2-stage}, we define for a vector $y \in \ZZ^{s + nt}$ with $\mathcal{A}y = 0$ the vector $y^{(0)} \in \ZZgeq^{s}$ which consists of the entries of $y$ that belong to the vertical block matrices $A^{(i)}$ and we define $y^{(i)} \in \ZZgeq^t$ to be the entries of $y$ that belong to the diagonal block matrix $B^{(i)}$.
\begin{theorem} \label{thm:2SSIP_bound}
    Let $y$ be a Graver element of the constraint matrix $\mathcal{A}$ of IP (\ref{IP:2-stage}). Then $\norinf{y}$ is bounded by $(r s \Delta)^{O(r s ((2 r \Delta +1)^{r s^2}))}$.
    More precisely, $\norone{y^{(i)}} \leq (r s \Delta)^{O(r s ((2 r \Delta +1)^{r s^2}))}$ for every $0 \leq i \leq n$.
\end{theorem}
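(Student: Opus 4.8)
The plan is to bound separately $\norone{y^{(0)}}$ and, in terms of it, each $\norone{y^{(i)}}$ for $i \geq 1$; since $\norinf{y} \leq \max_i \norone{y^{(i)}}$ this suffices. Two facts about a Graver element $y$ of $\mathcal{A}$ drive the argument. First, after negating some columns of $\mathcal{A}$ — which preserves the $2$-stage structure, since negating a column of the vertical part negates it in every block row and negating a column of $B^{(i)}$ affects only row $i$, and leaves $\Delta$ unchanged — we may assume $y \geq 0$. Second, for every $i \geq 1$ the block $y^{(i)}$ contains no nonzero $w \in \ker(B^{(i)})$ sign-compatible with it: otherwise $(0,\dots,0,w,0,\dots,0)$ is a nonzero element of $\ker(\mathcal{A})$ sign-compatible with and strictly below $y$, so $y$ decomposes. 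The hard part will be the bound on $\norone{y^{(0)}}$, and in particular arranging things so that Lemma~\ref{lem:subrepresentation} applies.

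For the diagonal blocks, fix $i \geq 1$ and put $b^{(i)} := B^{(i)} y^{(i)} = -A^{(i)} y^{(0)}$, so $\norinf{b^{(i)}} \leq s \Delta \norinf{y^{(0)}}$. I would form the sequence of vectors in $\ZZ^r$ consisting of $y^{(i)}_k$ copies of the $k$-th column of $B^{(i)}$ together with $\norone{b^{(i)}}$ correction vectors $\pm e_\ell$ summing to $-b^{(i)}$; the whole sequence sums to $0$ and all entries are bounded by $\Delta$. By the Steinitz Lemma it reorders so that all partial sums have $\norinf{\cdot} \leq r\Delta$, hence take at most $(2r\Delta+1)^r$ values. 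Arguing as in the proof of Theorem~\ref{thm:Graver_bound}, more than $(2r\Delta+1)^r$ consecutive column vectors (no corrections among them) contain two equal partial sums, and the zero-sum submultiset between them is a nonzero element of $\ker(B^{(i)})$ sign-compatible with $y^{(i)}$ — impossible. As there are only $\norone{b^{(i)}}$ correction vectors, this gives $\norone{y^{(i)}} \leq (2r\Delta+1)^r(\norone{b^{(i)}}+1) \leq (2r\Delta+1)^r(rs\Delta\norinf{y^{(0)}}+1)$, i.e. $\norone{y^{(i)}}$ is polynomial in $\norone{y^{(0)}}$ and $(2r\Delta+1)^r$.

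Now the main step, bounding $N := \norone{y^{(0)}}$. Fix once and for all an expansion of $y^{(0)}$ into $N$ ``slots'', slot $\ell$ labelled by an index $j(\ell) \in \{1,\dots,s\}$ with exactly $y^{(0)}_j$ slots labelled $j$. For each scenario $i$, consider the multiset $\mathcal{M}_i$ of vectors in $\ZZ^r$ made of the columns $A^{(i)}_{j(\ell)}$, one per slot $\ell$, together with $y^{(i)}_k$ copies of each column of $B^{(i)}$; all entries are $\leq \Delta$ and, since $\mathcal{A}y = 0$, $\mathcal{M}_i$ sums to $0$. Partition $\mathcal{M}_i$ into inclusion-minimal nonempty zero-sum submultisets (``blocks''), obtained by repeatedly extracting a smallest nonempty zero-sum subcollection. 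Exactly as in Theorem~\ref{thm:Graver_bound} (Steinitz plus pigeonhole) every block has at most $\delta := (2r\Delta+1)^r$ elements, and by the second fact above no block uses $B^{(i)}$-columns only, so every block uses at least one slot column. To each block of scenario $i$ associate the vector in $\ZZgeq^s$ recording how often each label $j$ occurs among its slots; this yields a multiset $T_i \subset \ZZgeq^s$ all of whose (nonzero) elements have $\norinf{\cdot} \leq \delta$ and whose total sum is $\sum_{\ell=1}^{N} e_{j(\ell)} = y^{(0)}$ — the same vector for every $i$.

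Lemma~\ref{lem:subrepresentation}, applied with dimension $d = s$ and $\delta$ in the role of $\Delta$, now produces nonempty $S_i \subseteq T_i$ with $|S_i| \leq (s\delta)^{O(s\delta^{s^2})}$ and a common total sum $w \in \ZZgeq^s$. The blocks of scenario $i$ selected by $S_i$ together use a submultiset of slots in which label $j$ occurs $w_j$ times and a submultiset of $B^{(i)}$-columns sign-compatible with $y^{(i)}$; since the union of these blocks still sums to $0$, setting $z^{(0)} := w$ (independent of $i$) and reading off $z^{(i)} \leq y^{(i)}$ from the selected $B^{(i)}$-columns gives $A^{(i)} z^{(0)} + B^{(i)} z^{(i)} = 0$ for every $i$. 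Hence $z := (z^{(0)}, z^{(1)}, \dots, z^{(n)}) \in \ker(\mathcal{A})$ is sign-compatible with $y$, and $w \neq 0$ (each block uses a slot and $S_i \neq \emptyset$), so $z \neq 0$. If $z \neq y$, then $y = z + (y - z)$ decomposes $y$, contradicting that $y$ is a Graver element. Otherwise $z = y$ forces $w = y^{(0)}$, hence $S_i = T_i$ for all $i$ (their sums agree and the elements of $T_i$ are nonzero), so the number of blocks equals $|T_i| = |S_i| \leq (s\delta)^{O(s\delta^{s^2})}$ and thus $N \leq \delta \cdot (s\delta)^{O(s\delta^{s^2})}$. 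In all cases $\norone{y^{(0)}} \leq (s(2r\Delta+1)^r)^{O(s(2r\Delta+1)^{rs^2})} = (rs\Delta)^{O(rs(2r\Delta+1)^{rs^2})}$, and the bound of the second paragraph then gives the same asymptotic bound for every $\norone{y^{(i)}}$ and for $\norinf{y}$. The main obstacle, as indicated, is that Lemma~\ref{lem:subrepresentation} needs the $T_i$ to have literally equal total sums across all scenarios even though the blocks $A^{(i)},B^{(i)}$ differ; encoding each block by its vector of slot multiplicities rather than by its image under $A^{(i)}$ is exactly what forces the common sum $y^{(0)}$, and passing to minimal zero-sum blocks is what guarantees that whatever subcollection the lemma returns still assembles into an honest element of $\ker(\mathcal{A})$.
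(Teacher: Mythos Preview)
Your proof is correct and follows essentially the same route as the paper: decompose each $(y^{(0)},y^{(i)})$ into Graver elements of $(A^{(i)}\,B^{(i)})$ (your ``minimal zero-sum blocks'' after the reduction to $y\ge 0$), project to the first $s$ coordinates (your ``slot-multiplicity vectors''), apply Lemma~\ref{lem:subrepresentation} in dimension $s$ with bound $\delta=(2r\Delta+1)^r$, and lift the resulting submultisets back to a sign-compatible kernel element $z\le y$. The paper handles signs at the end rather than reducing to $y\ge 0$ at the start, and it reads off the bound on $\norone{y^{(i)}}$ directly from $|S_i|\cdot\delta$ rather than via your separate Steinitz argument in the second paragraph, but these are cosmetic differences; the only point you might tidy is the degenerate case $y^{(0)}=0$, where your claim that every block contains a slot fails, but then $y$ is supported in a single $B^{(i)}$-block and Theorem~\ref{thm:Graver_bound} applies directly.
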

\begin{proof}
    Let $y \in \ZZgeq^{s+nt}$ be a cycle of IP (\ref{IP:2-stage}), i.e. $\mathcal{A} y = 0$.
    Consider a submatrix of the matrix $\mathcal{A}$ denoted by $(A^{(i)} B^{(i)}) \in \ZZ^{r \times (s+t)}$ consisting of the block matrix $A^{(i)}$ of the vertical line and the block matrix $B^{(i)}$ of the diagonal line. Consider further the corresponding variables $v^{(i)} = \begin{pmatrix}
    y^{(0)}\\ y^{(i)} \end{pmatrix} \in \ZZ^{s+t}$ of the respective matrix $A^{(i)}$ and $B^{(i)}$. Since $\mathcal{A}y = 0$, we also have that $( A^{(i)} B^{(i)} ) v^{(i)} = 0$. Hence, we can decompose $v^{(i)}$ into a multiset $C_i$ of indecomposable elements, i.e. $v^{(i)} = \sum_{c\in C_i} c$. By Lemma \ref{thm:Graver_bound} we obtain the bound  $\norone{c} \leq (2r \Delta +1)^{r}$ for each $c \in C_i$.
    
    Since all matrices $(A^{(i)} B^{(i)})$ share the same set of variables in the overlapping block matrices $A^{(i)}$, we can not choose indecomposable elements independently in each block to obtain a cycle of smaller size for the entire matrix $\mathcal{A}$. 
    Let $p: \ZZ^{s+t} \to \ZZ^s$ be the projection that maps a cycle $c$ of a block matrix $(A^{(i)} B^{(i)})$ to the variables in the overlapping part, i.e. $p(c) = p(\begin{pmatrix} c^{(0)} \\ c^{(i)} \end{pmatrix}) = c^{(0)}$. 
    In the case that $\norinf{y}$ is large we will show that we can find a cycle $\bar{y}$ of smaller length and $\bar{y} \leq y$. In order to obtain this cycle $\bar{y}$ for the entire matrix $\mathcal{A}$, we have to find a multiset of cycles $\bar{C}_i \subset C_i$ in each block matrix $(A^{(i)} B^{(i)})$ such that the sum of the projected parts is identical, i.e. $\sum_{c \in \bar{C}_1} p(c) = \ldots = \sum_{c \in \bar{C}_n} p(c)$. 
    We apply Lemma \ref{lem:subrepresentation} to the multisets $p(C_1), \ldots , p(C_n)$, where $p(C_i) = \{p(c) \mid c \in C_i \}$ is the multiset of projected elements in $C_i$ with $\norone{p(c)} \leq (2r \Delta +1)^{r}$. Note that $\sum_{x \in p(C_1)} x = \ldots = \sum_{x \in p(C_n)} x = y^{(0)}$ and hence the conditions to apply Lemma \ref{lem:subrepresentation} are fulfilled. Since every $v^{(i)}$ is decomposed in a sign compatible way, every entry of the vector in $p(C_i)$ has the same sign. Hence we can flip the negative signs in order to apply Lemma \ref{lem:subrepresentation}. 
    
    By the Lemma, there exist submultisets $S_1 \subseteq p(C_1), \ldots , S_n \subseteq p(C_n)$ such that $\sum_{x \in S_1} x = \ldots = \sum_{x \in S_n} x$ and $|S_i| \leq (s \norone{c})^{O(s (\norone{c}^{s^2}))} = (r s \Delta)^{O(r s ((2 r \Delta +1)^{r s^2}))}$. As there exist submultisets $\bar{C}_1 \subseteq C_1, \ldots \bar{C}_n \subseteq C_n$ with $p(\bar{C_1}) = S_1, \ldots p(\bar{C}_n) = S_n$, we can use those submultisets $\bar{C}_i$ to define a solution $\bar{y} \leq y$. For $i>0$ let $\bar{y}^{(i)} = \sum_{c \in \bar{C}_i} \bar{p}(c)$, where $\bar{p}(c)$ is the projection that maps a cycle $c \in \bar{C}_i$ to the part that belongs to matrix $B^{(i)}$, i.e. $\bar{p}(\begin{pmatrix} c^{(0)} \\ c^{(i)} \end{pmatrix}) = c^{(i)}$. And let $\bar{y}^{(0)} = \sum_{c \in \bar{C}_i} p(c)$ for an arbitrary $i>0$, which is well defined as the sum is identical for all multisets $\bar{C}_i$.
    As the cardinality of the multisets $\bar{C}_i$ is bounded, we know by construction of $\bar{y}$ that the one-norm of every $y^{(i)}$ is bounded by 
    \begin{align*}
        \norone{y^{(i)}} \leq (2r \Delta +1)^{r} \cdot (r s \Delta)^{O(r s ((2 r \Delta +1)^{r s^2}))} = (r s \Delta)^{O(r s ((2 r \Delta +1)^{r s^2}))}.
    \end{align*}
    This directly implies the infinity-norm bound for $y$ as well.
\end{proof}

\subsubsection*{Computing the Augmenting Step}
As a direct consequence of the bound for the size of the Graver elements, we obtain by the framework of augmenting steps an efficient algorithm to compute an optimal solution of a $2$-stage stochastic IP. 
For this we can use the algorithm by Hemmecke and Schultz \cite{Hemmecke_two_stage_03} or a more recent result by Koutecky, Levin and Onn \cite{KouteckyLO18} which gives a strongly polynomial algorithm. Using these algorithms directly would result in an algorithm with a running time of the form $poly(n)\cdot f(r,s,t, \Delta)$ for some doubly exponential function involving parameters $r,s,t$ and $\Delta$. 
However, in the following we explain briefly how the augmenting step can be computed in order to obtain an algorithm with a running time that is polynomial in $t$.

Given a feasible solution $z \in \ZZgeq^{s + nt}$ of IP (\ref{IP:2-stage}) and a multiple $\lambda \in \ZZgeq$ (which can be guessed). A core ingredient in the augmenting framework is to find an augmenting step. Therefore, we have to compute a Graver element $y \in ker(\mathcal{A})$ such that $z+ \lambda y$ is a feasible solution of IP (\ref{IP:2-stage}) and the objective $\lambda c^T y$ is maximal over all Graver elements.

Let $L = (r s \Delta)^{O(r s ((2 r \Delta +1)^{r s^2}))}$ be the bound for $\norone{y^{(i)}}$ that we obtain from the previous Lemma. To find the optimal augmenting step, it is sufficient to solve the IP $\max \{c^T x \mid \mathcal{A}x = 0, \bar{\ell} \leq x \leq \bar{u} , \norinf{x} \leq L \}$ for modified upper and lower bounds $\bar{\ell}, \bar{u}$ according to the multiple $\lambda$ and the feasible solution $z$.
Having the best augmenting step at hand, one can show that the objective value improves by a factor of $1-\frac{1}{2n}$. This is due to the fact that the difference $z-z^*$ between $z$ and an optimal solution $z^*$ can be represented by
\begin{align*}
    z -z^* = \sum_{i=1}^{2n} \lambda_i g_i
\end{align*}
for Graver elements $g_1, \ldots g_{2n} \in \ZZgeq^d$ and multiplicities $\lambda_1, \ldots , \lambda_{2n} \in \ZZgeq$ \cite{cook1986integer}.

In the following we briefly show how to solve the IP $\max \{c^T x \mid \mathcal{A}x = 0, \bar{\ell} \leq x \leq \bar{u} , \norinf{x} \leq L \}$ in order to compute the augmenting step. The algorithm works as follows:
\begin{itemize}
    \item Compute for every $y^{(0)}$ with $\norone{y^{(0)}} \leq L$ the objective value of the cycle $y$ consisting of $y^{(0)}, \bar{y}^{(1)}, \ldots , \bar{y}^{(n)}$, where $\bar{y}^{(i)}$ for $i>0$ are the optimal solutions of the IP
    \begin{align*}
        \max & (c^{(i)})^T \bar{y}^{(i)} \\
        B^{(i)}\bar{y}^{(i)} & = - A^{(i)} y^{(0)}\\
        \bar{\ell}^{(i)} \leq & \bar{y}^{(i)} \leq \bar{u}^{(i)}
    \end{align*}
    where $\bar{\ell}^{(i)}, \bar{u}^{(i)}$ are the upper and lower bounds for the variables $\bar{y}^{(i)}$ and $c^{(i)}$ their corresponding objective vector. Note that the first set of constraints of the IP ensure that $\mathcal{A}y = 0$. The IPs can be solved with the algorithm of Eisenbrand and Weismantel \cite{eisenbrand2018proximity} in time $O(\Delta^{O(r^2)})$ each.
    
    \item Return the cycle with maximum objective.
\end{itemize}
As the number of different vectors $y^{(0)}$ with $1$-norm $\leq L$ is bounded by $(L+1)^s = (r s \Delta)^{O(r s^2 ((2 r \Delta +1)^{r s^2}))}$ step 1 of the algorithm takes time $\Delta^{O(r^2)} \cdot (r s \Delta)^{O(r s^2 ((2 r \Delta +1)^{r s^2}))}$. 

Putting all things together, we obtain the following theorem regarding the worst-case complexity for solving $2$-stage stochastic IPs. For details regarding the remaining parts of the augmenting framework like finding an initial feasible solution or a bound on the required augmenting steps we refer to \cite{eisenbrand18} and \cite{KouteckyLO18}
\begin{theorem}
    A $2$-stage stochastic IP of the form (\ref{IP:2-stage}) can be solved in time
    \begin{align*}
        n^2t^2 \varphi \log^2 (nt)(r s \Delta)^{O(r s^2 ((2 r \Delta +1)^{r s^2}))},
    \end{align*} where $\varphi$ is the encoding length of the IP.
\end{theorem}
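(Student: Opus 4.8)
The plan is to feed the Graver bound of Theorem~\ref{thm:2SSIP_bound} into the generic Graver augmentation framework and to carefully account for the work performed in one augmentation step. First I would assemble the standard ingredients of the framework exactly as in \cite{eisenbrand18,KouteckyLO18}. An initial feasible solution of~(\ref{IP:2-stage}) is obtained by solving an auxiliary IP in which one introduces slack variables row-wise and minimizes the total slack; this auxiliary IP is again of $2$-stage stochastic shape, so finding a feasible start costs no more than one run of the algorithm to be described below. Given a feasible $z$, the theorem of Cook et al.\ \cite{cook1986integer} shows that the difference $z^*-z$ to an optimal solution $z^*$ decomposes into at most $2n$ Graver elements of $\mathcal A$ that are sign-compatible with $z^*-z$; hence there is a single Graver element $g$ and a multiple $\lambda\in\ZZgeq$ with $z+\lambda g$ feasible and $c^T(\lambda g)\ge \frac{1}{2n}\,c^T(z^*-z)$. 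Running $\lambda$ over the powers of two and, for each guess, taking a best bounded augmenting step (Graver-best step) makes the objective gap shrink by a factor $1-\frac{1}{2n}$ per round, so $\mathrm{poly}(n,\varphi)$ rounds with $\log(nt)$ bookkeeping factors suffice to reach optimality; I would simply quote the precise iteration count from \cite{eisenbrand18,KouteckyLO18}.

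Next I would bound the cost of one round, i.e.\ of computing a best augmenting step of bounded infinity-norm. By Theorem~\ref{thm:2SSIP_bound} every Graver element $y$ of $\mathcal A$ satisfies $\norone{y^{(i)}}\le L:=(r s \Delta)^{O(r s ((2 r \Delta +1)^{r s^2}))}$ for all $i$, so it suffices to solve $\max\{c^T x \mid \mathcal A x = 0,\ \bar\ell\le x\le\bar u,\ \norinf{x}\le L\}$ for the shifted box induced by $z$ and the current guess of $\lambda$. This is exactly the subproblem solved by the two-step enumeration sketched above: enumerate every candidate overlapping part $y^{(0)}$ with $\norone{y^{(0)}}\le L$ --- there are at most $(L+1)^s=(r s \Delta)^{O(r s^2 ((2 r \Delta +1)^{r s^2}))}$ of them --- and, for each of the $n$ blocks, solve the small IP $\max\{(c^{(i)})^T \bar y^{(i)} \mid B^{(i)}\bar y^{(i)}=-A^{(i)}y^{(0)},\ \bar\ell^{(i)}\le \bar y^{(i)}\le\bar u^{(i)}\}$ with the algorithm of Eisenbrand and Weismantel \cite{eisenbrand2018proximity}, which runs in time $\Delta^{O(r^2)}$ up to factors polynomial in $t$ and $\log(nt)$. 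Since the first block of constraints of each small IP forces $\mathcal A y=0$ for the reassembled vector $y=(y^{(0)},\bar y^{(1)},\dots,\bar y^{(n)})$ and its objective is the sum of the $n$ block objectives, the best choice of $y^{(0)}$ over all candidates yields a best bounded augmenting step.

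Finally I would multiply the per-round cost (number of candidates $(L+1)^s$, times $n$ blocks, times the per-block Eisenbrand--Weismantel cost $\Delta^{O(r^2)}$ and the $\mathrm{poly}(t,\log(nt))$ overhead for solving and for evaluating the reassembled cycle) by the number of rounds from \cite{eisenbrand18,KouteckyLO18}; absorbing $\Delta^{O(r^2)}$ into $(L+1)^s=(r s \Delta)^{O(r s^2 ((2 r \Delta +1)^{r s^2}))}$ and collecting the resulting polynomial factors in $n$, $t$, $\varphi$ and $\log(nt)$ gives the claimed running time $n^2t^2\varphi\log^2(nt)\,(r s \Delta)^{O(r s^2 ((2 r \Delta +1)^{r s^2}))}$. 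The step I expect to require the most care is not any single numerical estimate but the interface between the enumeration and the augmentation guarantee: one must verify that restricting to augmenting vectors of infinity-norm at most $L$ does not destroy the $\frac{1}{2n}$-improvement. This is precisely why the argument is routed through Graver elements (each bounded by Theorem~\ref{thm:2SSIP_bound}) together with the power-of-two scaling of the multiplier $\lambda$, rather than directly through the possibly large vectors $\lambda_i g_i$ in the Cook et al.\ decomposition; once that is in place, the remaining bookkeeping --- the number of rounds, the $\log$-factors in the Eisenbrand--Weismantel subroutine, and the reduction to the auxiliary $2$-stage IP for the initial solution --- is standard and can be imported from \cite{eisenbrand18,KouteckyLO18}.
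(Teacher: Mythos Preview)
Your proposal is correct and follows essentially the same route as the paper: plug the Graver bound of Theorem~\ref{thm:2SSIP_bound} into the augmentation framework, compute each augmenting step by enumerating the at most $(L+1)^s$ candidates for $y^{(0)}$ and solving the resulting per-block IPs via Eisenbrand--Weismantel in time $\Delta^{O(r^2)}$, and import the remaining iteration bounds and the initial-feasibility reduction from \cite{eisenbrand18,KouteckyLO18}. The paper's own argument is in fact sparser than yours on the framework details (initial feasibility, power-of-two scaling of $\lambda$, the $\tfrac{1}{2n}$-improvement), so nothing is missing.
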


\subsection{About the Intersection of Integer Cones}
Before we are ready to prove our main Lemma \ref{lem:subrepresentation}, we need two helpful observations about the intersection of integer cones. An integer cone is defined for a given (finite) generating set $B \subset \ZZgeq^d$ of elements by
\begin{align*}
    int.cone(B) = \{ \sum_{b \in B} \lambda_b b \mid \lambda \in \ZZgeq^B \}.    
\end{align*}
Note that the intersection of two integer cones is again an integer cone, as the intersection is closed under addition and scalar multiplication of positive integers.

We say that an element $b$ of an integer cone $int.cone(B)$ is \emph{indecomposable} if there do not exist elements $b_1, b_2 \in int.cone(B) \setminus \{ 0\}$ such that $b = b_1 + b_2$.
We can assume that the generating set $B$ of an integer cone consists just of the set of indecomposable elements as any decomposable element can be removed from the generating set.

In the following we allow to use a vector set $B$ as a matrix and vice versa where the elements of the set $B$ are the columns of the matrix $B$. This way we can multiply $B$  with a vector, i.e. $B \lambda =  \sum_{b \in B} \lambda_b b$ for some $\lambda \in \ZZ^B$.
\begin{lemma}\label{lem:intersectiontwo}
    Given two integer cones $int.cone(B^{(1)})$ and $int.cone(B^{(2)})$ for some generating sets $B^{(1)}, B^{(2)} \subset \ZZ^{d}$ where each element $x \in B^{(1)} \cup B^{(2)}$ has bounded norm $\norinf{x} \leq \Delta$. Consider the integer cone of the intersection 
    \begin{align*}
        int.cone(\hat{B}) = int.cone(B^{(1)}) \cap int.cone(B^{(2)})
    \end{align*}
    for some generating set of elements $\hat{B}$.
    Then for each generating element $b \in \hat{B}$ of the  intersection cone with $b = B^{(1)} \lambda = B^{(2)} \gamma$ for some $\lambda \in \ZZgeq^{B^{(1)}}$ and $\gamma \in \ZZgeq^{B^{(2)}}$, we have that $\norone{\lambda}, \norone{\gamma} \leq (2d \Delta +1)^d$. Furthermore, the size of $b$ is bounded by $\norinf{b} \leq \Delta (2d \Delta +1)^d$
\end{lemma}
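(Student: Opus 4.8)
The plan is to run the same Steinitz-plus-pigeonhole argument that underlies Theorem~\ref{thm:Graver_bound}, but applied to the \emph{combined} system $B^{(1)}\lambda = B^{(2)}\gamma = b$. Fix an indecomposable generator $b \in \hat{B}$ and write $b = B^{(1)}\lambda = B^{(2)}\gamma$ with $\lambda \in \ZZgeq^{B^{(1)}}$ and $\gamma \in \ZZgeq^{B^{(2)}}$, where we may assume all generating vectors are nonzero. I would build a single sequence of $N := \norone{\lambda} + \norone{\gamma}$ vectors in $\ZZ^d$: take $\lambda_x$ copies of $x$ for every $x \in B^{(1)}$ (call these the \emph{positive} vectors) and $\gamma_y$ copies of $-y$ for every $y \in B^{(2)}$ (the \emph{negative} vectors). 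Their total sum is $b - b = 0$ and each has infinity norm at most $\Delta$, so the Steinitz Lemma yields an ordering $v_1, \dots, v_N$ whose partial sums $s_k := \sum_{i=1}^{k} v_i$ satisfy $\norinf{s_k} \le d\Delta$ for all $k$.

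Next I would apply the pigeonhole principle to $s_0, s_1, \dots, s_{N}$, all of which lie in the cube $\{z \in \ZZ^d : \norinf{z} \le d\Delta\}$ of size $(2d\Delta+1)^d$. Suppose $N > (2d\Delta+1)^d$; then already among $s_0, \dots, s_{N-1}$ two coincide, say $s_j = s_k$ with $0 \le j < k \le N-1$, so $(j,k) \neq (0,N)$. The key point is that the block $v_{j+1}, \dots, v_k$ is \emph{sign-balanced}: if $p$ is the sum of its positive vectors and $-q$ the sum of its negative vectors, then $p - q = s_k - s_j = 0$, hence $p = q$. Reading $p = B^{(1)}\lambda'$ for the submultiplicity vector $\lambda' \le \lambda$ of positive vectors occurring in the block, and $q = B^{(2)}\gamma'$ for the corresponding $\gamma' \le \gamma$, we get $b' := p = q \in int.cone(B^{(1)}) \cap int.cone(B^{(2)}) = int.cone(\hat{B})$; since the block is nonempty and the generators are nonzero nonnegative vectors, $p = q$ forces $b' \neq 0$, and the identical argument applied to the (also nonempty) complementary block $\{1,\dots,j\}\cup\{k+1,\dots,N\}$ gives $b - b' \neq 0$. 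Thus $b = b' + (b-b')$ decomposes $b$ into two nonzero elements of $int.cone(\hat B)$, contradicting indecomposability. Therefore $N = \norone{\lambda} + \norone{\gamma} \le (2d\Delta+1)^d$, which in particular yields $\norone{\lambda}, \norone{\gamma} \le (2d\Delta+1)^d$ and then $\norinf{b} = \norinf{B^{(1)}\lambda} \le \Delta\,\norone{\lambda} \le \Delta(2d\Delta+1)^d$.

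The routine parts I would still spell out are: that a submultiset of the $v_i$ translates into componentwise-smaller multiplicity vectors $\lambda' \le \lambda$ and $\gamma' \le \gamma$, and that both $b'$ and $b-b'$ genuinely lie in the intersection cone. The step I expect to require the most care is the bookkeeping around the sign-balanced blocks — one must verify that a block lying between two equal Steinitz partial sums splits $b$ \emph{properly} (so $0 < \norone{\lambda'}+\norone{\gamma'} < N$) and \emph{nontrivially} (both summands nonzero); this is precisely where excluding the endpoint pair $(0,N)$ and using that the generators are nonzero (and nonnegative) comes in, and it is the only place the argument is more than a direct transcription of Theorem~\ref{thm:Graver_bound}.
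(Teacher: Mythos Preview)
Your proposal is essentially identical to the paper's proof: form the length-$N$ sequence of $\lambda_x$ copies of $x\in B^{(1)}$ together with $\gamma_y$ copies of $-y$ for $y\in B^{(2)}$, apply Steinitz to bound all partial sums by $d\Delta$, and use pigeonhole on the cube $\{-d\Delta,\dots,d\Delta\}^d$ to find a repeated partial sum, which splits $(\lambda,\gamma)$ into two pairs each witnessing a point of the intersection cone. If anything, you are more careful than the paper about the one genuine subtlety---verifying that both pieces $b'$ and $b-b'$ are nonzero---which the paper asserts without comment; your use of nonnegativity of the generators (consistent with the paper's definition of integer cone just above the lemma) is exactly what is needed there.
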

\begin{proof}
Consider the representation of a point $b = B^{(1)} \lambda = B^{(2)} \gamma$ in the intersection of $int.cone(B^{(1)})$ and $int.cone(B^{(2)})$. The sum $v_1 + \ldots v_{(\norone{\lambda} + \norone{\gamma})}$ consisting of 
$\lambda_i$ copies of the $i$-th element of $B^{(1)}$ and $\gamma_i$ copies of the negative of the $i$-th element of $B^{(2)}$ equals to zero. 
Using Steinitz' Lemma, there exists a reordering of the vectors $u_1 + \ldots + u_{(\norone{\lambda} + \norone{\gamma})}$ such that the partial sum $\sum_{i=1}^\ell u_i \leq d \Delta$, for each $\ell \leq \norone{\lambda} + \norone{\gamma}$. 

If $\norone{\lambda} + \norone{\gamma} > (2d \Delta +1)^d$ then by the pigeonhole principle, there exist two partial sums of the same value. Hence, there are two sequences that sum up to zero, i.e. there exist non-zero vectors $\lambda', \lambda'' \in \ZZgeq^{B^{(1)}}$ with $\lambda = \lambda' + \lambda''$ and $\gamma', \gamma'' \in \ZZgeq^{B^{(2)}}$ with $\gamma = \gamma' + \gamma''$ such that $B^{(1)} \lambda' - B^{(2)} \gamma'= 0$ and $B^{(1)} \lambda'' - B^{(1)} \gamma''= 0$. Hence $B^{(1)} \lambda' = B^{(2)} \gamma'$ and $B^{(1)} \lambda'' = B^{(2)} \gamma''$ are elements of the intersection cone. This implies that $b$ can be decomposed in the intersection cone.
\end{proof}
Using a similar argumentation as in the previous lemma, we can consider the intersection of several integer cones. Note that we can not simply use the above Lemma inductively as this would lead to worse bounds.
\begin{lemma}\label{lem:intersection_all}
    Consider integer cones $int.cone(B^{(1)}), \ldots , int.cone(B^{(\ell)})$ for some generating sets $B^{(1)}, \ldots , B^{(\ell)} \subset \ZZgeq^{d}$ with $\norinf{x} \leq \Delta$ for each $x \in B^{(i)}$. Consider the integer cone of the intersection 
    \begin{align*}
        int.cone(\hat{B}) = \bigcap_{i=1}^\ell int.cone(B^{(i)})
    \end{align*}
    for some generating set of elements $\hat{B}$.
    
    Then for each generating element $b \in \hat{B}$ with $B^{(i)} \lambda^{(i)} = b$ for some $\lambda^{(i)} \in \ZZgeq^{B^{(i)}}$ in the intersection cone, we have that $\norone{\lambda^{(i)}} \leq O((d \Delta)^{d (\ell-1)})$ for all $1 \leq i \leq \ell$.
\end{lemma}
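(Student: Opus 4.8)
\emph{Proof idea.} The plan is to follow the strategy behind Lemma~\ref{lem:intersectiontwo}, but to invoke the Steinitz Lemma only \emph{once}, on a sequence living in $\RR^{d(\ell-1)}$, so that all $\ell$ representations are kept synchronized simultaneously; iterating the pairwise lemma instead would produce a tower-type blow-up in $\ell$, which is exactly what the remark preceding the statement warns about.

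In detail: write $b = B^{(i)}\lambda^{(i)}$ and regard $\lambda^{(i)}$ as a walk of $N_i := \norone{\lambda^{(i)}}$ steps from $0$ to $b$, each step a column of $B^{(i)}$. For each step $v$ of walk $1$ I would form the vector $(v,v,\ldots,v)\in\RR^{d(\ell-1)}$ ($\ell-1$ stacked copies of $v$), and for each step $v$ of walk $i$ with $2\le i\le\ell$ the vector having $-v$ in its $(i-1)$-st block of length $d$ and $0$ elsewhere. Each of these $N:=\sum_{i=1}^{\ell}N_i$ vectors has infinity norm at most $\Delta$, and their total sum vanishes because every walk sums to $b$. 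Applying the Steinitz Lemma in dimension $d(\ell-1)$ yields an ordering of all $N$ vectors in which every prefix sum has infinity norm at most $d(\ell-1)\Delta$. A prefix picks out a sub-multiset $\mu^{(i)}\le\lambda^{(i)}$ of each walk, and its prefix sum is precisely the tuple $\bigl(B^{(1)}\mu^{(1)}-B^{(2)}\mu^{(2)},\ldots,B^{(1)}\mu^{(1)}-B^{(\ell)}\mu^{(\ell)}\bigr)$; call this the \emph{state} of the prefix, of which there are at most $(2d(\ell-1)\Delta+1)^{d(\ell-1)}$.

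Now suppose $N>(2d(\ell-1)\Delta+1)^{d(\ell-1)}$. Among the $N$ proper prefixes (those of lengths $0,\ldots,N-1$), prefixes of different lengths are nested, so the pigeonhole principle gives two of them $\pi'\subsetneq\pi''$ with $\pi''$ proper and the same state. Put $\nu^{(i)}:=\mu^{(i)}(\pi'')-\mu^{(i)}(\pi')\ge 0$; equality of states gives $B^{(1)}\nu^{(1)}=\cdots=B^{(\ell)}\nu^{(\ell)}=:x$, so $x\in\bigcap_i int.cone(B^{(i)})=int.cone(\hat B)$, and likewise $b-x=B^{(i)}(\lambda^{(i)}-\nu^{(i)})$ with $\lambda^{(i)}-\nu^{(i)}\ge 0$, so $b-x\in int.cone(\hat B)$ as well. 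This is where nonnegativity of the $B^{(i)}$ is used: since $\pi'\ne\pi''$ some $\nu^{(i)}\ne 0$, and a nonempty nonnegative combination of nonzero nonnegative vectors is nonzero, so $x\ne 0$; and $x=b$ would force $\lambda^{(i)}-\nu^{(i)}=0$ for all $i$, i.e.\ $\pi''$ to be the entire sequence, contradicting properness. Thus $b=x+(b-x)$ is a nontrivial decomposition of $b$ in the intersection cone, contradicting that $b\in\hat B$ is indecomposable. Hence $N\le (2d(\ell-1)\Delta+1)^{d(\ell-1)}=O\bigl((d\Delta)^{d(\ell-1)}\bigr)$, and in particular $\norone{\lambda^{(i)}}=N_i\le N$ obeys the claimed bound.

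The crux — and the only point that really goes beyond Lemma~\ref{lem:intersectiontwo} — is finding the $\RR^{d(\ell-1)}$ gadget that lets a single Steinitz application keep all $\ell$ partial representations simultaneously close, together with the verification that a repeated prefix state yields an honest (nonzero, non-full, integral, sign-consistent) splitting of $b$ inside the intersection cone; nonnegativity of the generating sets $B^{(i)}$ is exactly what makes that last verification clean, which is why it is hypothesized here but was not needed in Lemma~\ref{lem:intersectiontwo}.
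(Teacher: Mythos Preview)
Your argument is correct, and it is a genuinely different route from the paper's. You apply the Steinitz Lemma \emph{once} in the ambient space $\RR^{d(\ell-1)}$, stacking the $\ell-1$ difference blocks into a single vector; the paper instead applies Steinitz $\ell$ times, once to each walk separately in dimension $d+1$, using an auxiliary ``counter'' coordinate and $L$ copies of $-\tfrac{1}{L}b$ to force every reordered walk to stay within $O(d\Delta)$ of the common reference segment $m\mapsto \tfrac{m}{L}b$. After that, the $(\ell-1)$-tuple of pairwise differences $p^{(1)}_m-p^{(k)}_m$ lives in a box of side $O(d\Delta)$, and pigeonholing over $m$ yields the decomposition. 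Your gadget is conceptually simpler and avoids the padding and the extra coordinate, while the paper's decoupling via the reference line is the slightly more delicate idea.

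There is one quantitative slip worth flagging. Because you invoke Steinitz in dimension $d(\ell-1)$, your prefix bound is $d(\ell-1)\Delta$, and the number of states is $(2d(\ell-1)\Delta+1)^{d(\ell-1)}$. This is \emph{not} $O\bigl((d\Delta)^{d(\ell-1)}\bigr)$ in the sense the lemma states: the base carries an extra factor of $\ell-1$, so your bound is really $O\bigl((d\ell\Delta)^{d(\ell-1)}\bigr)$. The paper's reference-line trick is precisely what removes that $\ell$ from the base, since each difference $p^{(1)}_m-p^{(k)}_m$ is bounded in $\RR^d$ by a quantity depending only on $d$ and $\Delta$. For the downstream application (Lemma~\ref{lem:subrepresentation}) the discrepancy is harmless---there $\ell\le\Delta^{d^2}$, so the extra factor is absorbed into the doubly exponential bound---but as stated your final asymptotic does not quite match the lemma's claim. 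A small secondary point: your nonzeroness argument for $x$ and $b-x$ tacitly assumes $0\notin B^{(i)}$, which is standard for generating sets but worth saying explicitly.
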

\begin{proof}
Given vectors $\lambda^{(1)}, \ldots , \lambda^{(\ell)}$ with $\lambda^{(k)} \in \ZZgeq^{B^{(k)}}$ and $B^{(k)} \lambda^{(k)} = b$ for each $k \leq \ell$. Consider the sum of vectors $v^{(k)}_1 + \ldots + v^{(k)}_{\norone{\lambda^{(k)}}}$ for each $1 \leq k \leq \ell$ consisting of $\lambda_j^{(k)}$ copies of the $j$-th element of $B^{k}_j$. By adding $0$ vectors to sums we can assume without loss of generality that every sequence has the same number of summands $L = \max_{i=1, \ldots , \ell} \norone{\lambda^{(i)}}$.

\textbf{Claim:} There exists a reordering $u^{(k)}_1 + \ldots + u^{(k)}_{L}$ for each of these sums such that each partial sum $p^{(k)}_m = \sum_{i =1}^m u^{(k)}_i$ is close to the line between $0$ and $b$ and more precisely: \begin{align*}
    \norinf{p^{(k)}_m - \frac{m}{L} b} \leq 4 \Delta (d+1).
\end{align*} for each $m \leq L$ and each $k \leq \ell$.
To see this, we construct the sequence that consists of vectors from $B^{(k)}$ and subtract $L$ fractional parts of the vector $b$. To count the number of vectors we use an additional component with weight $\Delta$ of the vector and define $\bar{v}^{(k)}_i = \begin{pmatrix} \Delta \\ v^{(k)}_i \end{pmatrix}$ and $\bar{b} = \begin{pmatrix} L \Delta \\ b \end{pmatrix}$. Note that $\norone{\bar{v}^{(k)}_i},\norone{\frac{1}{L}\bar{b}} \leq 2 \Delta$. Then the sequence $\bar{v}^{(k)}_1 + \ldots + \bar{v}^{(k)}_{L} - \frac{1}{L} \bar{b} - \ldots - \frac{1}{L}\bar{b}$ sums up to zero, as $v^{(k)}_1 + \ldots + v^{(k)}_{L} = b$. Hence  we can apply the Steinitz Lemma to obtain a reordering $\bar{u}_1 + \ldots + \bar{u}_L$ for each sequence such that each partial sum $\norinf{\sum_{i=1}^m \bar{u}_i } \leq 2 \Delta (d+1)$ for each $m \leq 2 L$. Each partial sum that sums up to index $m$ contains $p$ vectors $\bar{v}^{(k)}_j$ and $q$ vectors $\frac{1}{L} b$ for some $p,q \in \ZZgeq$ with $m =p+q$. Hence $\sum^{p}_{i=1} u_i - \frac{q}{L}b \leq 2 \Delta (d+1)$. Furthermore, the $\Delta$ entry of each vector guarantees that $|p-q| \leq 2(d+1)$ which implies the statement of the claim.

Now consider the differences of a partial sum $p^{(k)}_m$ with $p^{(1)}_m$. Using the claim from above, we can now argue that $\norinf{p^{(1)}_m - p^{(k)}_m} \leq 8 \Delta (d+1)$ for each $m \leq L$ and $k \leq \ell$ as each $p^{(k)}_m$ is close to $\frac{m}{L}b$. Therefore the number of different values for $p^{(1)}_m - p^{(k)}_m$ is bounded by $(16 \Delta (d+1)+1)^d$. Assuming that $L > (16 \Delta (d+1)+1)^{d(\ell-1)}$, by the pigeonhole principle there exist indices $m'$ and $m''$ with $m' > m''$ such that $p^{(1)}_{m'} - p^{(k)}_{m'} = p^{(1)}_{m''} - p^{(k)}_{m''}$ for each $k \leq \ell$. Hence $p^{(1)}_{m'} - p^{(1)}_{m''} = \ldots = p^{(\ell)}_{m'} - p^{(\ell)}_{m''} =: b'$ and 
$b', b-b' \in \cap_{i=1}^\ell B_i$. This implies that $b$ can be decomposed and is therefore not a generating element of $\cap_{i=1}^\ell int.cone(B^i).$
\end{proof}

\subsection{Proof of Lemma \ref{lem:subrepresentation}:}
Using the results from the previous section, we are now finally able to prove our main Lemma \ref{lem:subrepresentation}.

To get an intuition for the problem however, we start by giving a sketch of the proof for the $1$-dimensional case. In this case, the multisets $T_i$ consist solely of natural numbers, i.e $T_1, \ldots , T_n \subset \ZZgeq$. Suppose that each set $T_i$ consists only of many copies of a single integral number $x_i \in \{1 , \ldots , \Delta \}$. Then it is easy to find a common multiple as $\frac{\Delta!}{1} \cdot 1 = \frac{\Delta !}{2} \cdot 2 = \ldots = \frac{\Delta !}{\Delta} \cdot \Delta$. Hence one can choose the subsets consisting of $\frac{\Delta !}{x_i}$ copies of $x_i$.
Now suppose that the multisets $T_i$ can be arbitrary. If $|T_i| \leq \Delta \cdot \Delta ! = \Delta^{O(\Delta)}$ we are done. But on the other hand, if $|T_i| \geq \Delta \cdot \Delta!$, by the pigeonhole principle there exists a single element $x_i \in \{1, \ldots , \Delta\}$ for every $T_i$ that appears at least $\Delta !$ times. Then we can argue as in the previous case where we needed at most $\Delta !$ copies of a number $x_i \in \{ 1, \ldots , \Delta \}$. This proves the lemma in the case $d=1$.

In the case of higher dimensions, the lemma seems much harder to prove. But in principle we use generalizations of the above techniques. Instead of single natural numbers however, we have to work with bases of corresponding basic feasible LP solutions and the intersection of the integer cone generated by those bases.

In the proof we need the notion of a cone which is simply the relaxation of an integer cone. For a generating set $B \subset \ZZgeq^d$, a cone is defined by
\begin{align*}
    cone(B) = \{ \sum_{b \in B} \lambda_b b \mid \lambda \in \RRgeq^B \}.
\end{align*}
\begin{proof}
First, we describe the multisets $T_1, \ldots, T_n \subset \ZZgeq^d$ by multiplicity vectors $\lambda^{(1)}, \ldots , \lambda^{(n)} \in \ZZgeq^{P}$, where $P \subset \ZZ^d$ is the set of integer points $p$ with $\norinf{p} \leq \Delta$. Each $\lambda^{(i)}_{p}$ thereby states the multiplicity of vector $p$ in $T_i$. Hence $\sum_{t \in T_i} t = \sum_{p \in P} \lambda^{(i)}_{p} p$ and our objective is to find vectors $y^{(1)}, \ldots, y^{(n)} \in \ZZgeq^P$ with $y^{(i)} \leq \lambda^{(i)}$ such that $\sum_{p \in P} y^{(1)}_{p} p = \ldots = \sum_{p \in P} y^{(n)}_{p} p$.

Consider the linear program 
\begin{align}\label{LP:basic}
     \sum_{p \in P} x_p p = b\\
     x \in \RRgeq^P \notag
     \end{align} 
Let $x^{(1)}, \ldots , x^{(\ell)} \in \RRgeq^d$ be all possible basic feasible solutions of the LP corresponding to bases $B^{(1)}, \ldots , B^{(\ell)} \in \ZZgeq^{d \times d}$ i.e. $B^{(i)} x^{(i)} = b$.

In the following we proof two claims that correspond to the two previously described cases of the one dimensional case. First, we consider the case that essentially each multiset $T_i$ corresponds to one of the basic feasible solution $x^{(j)}$. In the $1$-dimensional case this would mean that each set consists only of a single number. Note that the intersection of integer cones in dimension $1$ is just the least common multiple, i.e. $int.cone(z_1) \cap int.cone(z_2) = int.cone(lcm(z_1, z_2))$ for some $z_1, z_2 \in \ZZgeq$.

\textbf{Claim 1:} If for all $i$ we have $\norone{x^{(i)}} > d \cdot O((d \Delta)^{d (\ell-1)})$ then there exist non-zero vectors $y^{(1)}, \ldots , y^{(\ell)} \in \ZZgeq^d$ with $y^{(1)} \leq x^{(1)}, \ldots , y^{(\ell)} \leq x^{(\ell)}$ and $\norone{y^{(i)}} \leq d \cdot O((d \Delta)^{d (\ell-1)})$ such that $B^{(1)}y^{(1)} = \ldots = B^{(\ell)} y^{(\ell)}$.

\textbf{Proof of the claim:} 
Note that $B^{(i)} x^{(i)} = b$ and hence $b \in cone(B^{(i)})$. In the following, our goal is to find a non-zero point $q \in \ZZgeq^d$ such that $q = B^{(1)} y^{(1)} = \ldots = B^{(\ell)} y^{(\ell)}$ for some vectors $y^{(1)}, \ldots , y^{(\ell)} \in \ZZgeq^d$. However, this means that $q$ has to be in the integer cone $int.cone(B^{(i)})$ for every $1 \leq i \leq \ell$ and therefore in the intersection of all the integer cones, i.e. $q \in \bigcap_{i=1}^n int.cone(B^{(i)})$. By Lemma \ref{lem:intersection_all} there exists a set of generating elements $\hat{B}$ such that 
\begin{itemize}
    \item $int.cone(\hat{B}) = \bigcap_{i=1}^n int.cone(B^{(i)})$ and $int.cone(\hat{B}) \neq \{ 0 \}$ as $b \in cone(\hat{B})$ and
    \item each generating vector $p \in \hat{B}$ can be represented by $p = B^{(i)} \lambda$ for some $\lambda \in \ZZgeq^d$ with $\norone{\lambda} \leq O((d \Delta)^{d (\ell-1)})$ for each basis $B^{(i)}$.
\end{itemize}
As $b \in cone(\hat{B})$ there exists a vector $\hat{x} \in \RRgeq^{\hat{\BB}}$ with $\hat{\BB} \hat{x} = b$. Our goal is to show that there exists a non-zero vector $q \in \hat{B}$ with $\hat{x}_q \geq 1$. In this case $b$ can be simply written by $b = q + q'$ for some $q' \in cone(\hat{B})$. As $q$ and $q'$ are contained in the intersection of all cones, there exists for each generating set $\BB^{(j)}$ a vectors $y^{(j)} \in \ZZgeq^{\BB^{(j)}}$ and $z^{(j)} \in \RRgeq^{\BB^{(j)}}$ such that $\BB^{(j)} y^{(j)} = q$ and $\BB^{(j)} z^{(j)} = q'$. Hence $x^{(j)} = y^{(j)} + z^{(j)}$ and we finally obtain that $x^{(j)} \geq y^{(j)}$ for $y^{(j)} \in \ZZgeq^{\BB^{(j)}}$ which shows the claim.

Therefore it only remains to prove the existence of the point $q$ with $\hat{x}_q \geq 1$. By Lemma \ref{lem:intersection_all}, each vector $p \in \hat{B}$ can be represented, by $p = B^{(i)} x^{(p)}$ for some $x^{(p)} \in \ZZgeq^{B^{(i)}}$ with $\norone{x^{(p)}} \leq O((d \Delta)^{d (\ell-1)})$ for every basis $B^{(i)}$. 

As $B^{(i)} x^{(i)} = b = \sum_{p \in \hat{B}} \hat{x}_p p = \sum_{p \in \hat{B}} \hat{x}_p (B^{(i)} x^{(p)})$, every $x^{(i)}$ can be written by  $x^{(i)} = \sum_{p \in \hat{B}} x^{(p)} \hat{x}_p$ and we obtain a bound on $\norone{x^{(i)}}$ assuming that every for every $p \in \hat{B}$ we have $\hat{x}_p < 1$.
\begin{align*}
    \norone{x^{(i)}} \leq \sum_{p \in \hat{B}} \norone{x^{(p)} \hat{x}_p} \stackrel{\hat{x}_p < 1}{<} \sum_{p \in \hat{B}} \norone{x^{(p)}} \leq d \cdot O((d \Delta)^{d (\ell-1)}).
\end{align*}
The last inequality follows as we can assume by Caratheodory's theorem \cite{schrijver1998theory} that the number of non-zero components of $\hat{x}$ is less or equal than $d$.
Hence if $\norone{x^{(i)}} \geq d \cdot O((d \Delta)^{d (\ell-1)})$ then there has to exist a vector $q \in \hat{B}$ with $x_q \geq 1$ which proves the claim.

\textbf{Claim 2:} For every vector $\lambda^{(i)} \in \ZZgeq^P$ with $\sum_{p \in P} \lambda_p p = b$ there exists a basic feasible solution $x^{(j)}$ of LP (\ref{LP:basic}) with basis $B^{(j)}$ such that $\frac{1}{\ell}x^{(j)} \leq \lambda^{(i)}$ in the sense that $\frac{1}{\ell} x^{(j)}_p \leq \lambda^{(i)}_p$ for every $p \in B^{(j)}$.

\textbf{Proof of the claim:} 
The proof of the claim can be easily seen as each multiplicity vector $\lambda^{(i)}$ is also a solution of the linear program (\ref{LP:basic}). By standard LP theory, we know that each solution of the LP is a convex combination of the basic feasible solutions $x^{(1)}, \ldots , x^{(\ell)}$. Hence, each multiplicity vector $\lambda^{(i)}$ can be written as a convex combination of $x^{(1)}, \ldots , x^{(\ell)}$, i.e. for each $\lambda^{(i)}$, there exists a $t \in \RRgeq^\ell$ with $\norone{t} = 1$ such that $\lambda^{(i)} = \sum_{i=1}^\ell t_i \bar{x}^{(i)}$, where $\bar{x}^{(i)}_p~=~\begin{cases} x^{(i)}_p &\text{ if } p \in B^{(i)} \\ 0 & \text{ otherwise}\end{cases}$. By the pigeonhole principle, there exists for each multiplicity vector $\lambda^{(i)}$ an index $j$ with $t_j \geq  \frac{1}{\ell}$ which proves the claim.

Using the above two claims, we can now prove the claim of the lemma by showing that for each $\lambda^{(i)}$, there exist a vector $y^{(i)} \leq \lambda^{(i)}$ with bounded $1$-norm such that $\sum_{p \in P} y^{(1)}_p p = \ldots = \sum_{p \in P} y^{(n)}_p p$.

By Claim 2 we know that for each $\lambda^{(i)}$ ($1 \leq i \leq n$) we find one of the basic feasible solutions $x^{(j)}$ ($1 \leq j \leq \ell$) with $\frac{1}{\ell} x^{(j)} \leq \lambda^{(i)}$.
Applying the first claim to vectors $\frac{1}{\ell} x^{(1)}, \ldots , \frac{1}{\ell} x^{(\ell)}$ with $\frac{1}{\ell}b = \frac{1}{\ell} Bx^{(1)} = \ldots = \frac{1}{\ell} B x^{(\ell)}$, we obtain vectors $y^{(1)} \leq \frac{1}{\ell}x^{(1)}, \ldots , y^{(\ell)} \leq \frac{1}{\ell}x^{(\ell)}$ with $By^{(1)} = \ldots = B y^{(\ell)}$. Hence, we find for each $\lambda^{(i)}$ a vector $y^{(j)} \in \ZZgeq^{B^{(j)}}$ with $y^{(j)} \leq \lambda^{(i)}$.

As $\sum_{p \in P} \lambda^{(i)}_p p = b = \sum_{p \in \BB^{(j)}}x^{(j)}_p p$ and every $p \in P$ is bounded by $\norinf{p} \leq \Delta$ we know that 
\begin{align} 
    \norone{\lambda^{(i)}} \leq d \Delta \norone{x^{(j)}} \label{eq:1}
\end{align} for every $i \leq n$ and every $j \leq \ell$.
Hence if $\norone{\lambda^{(i)}} \geq d^2 \Delta \ell \cdot O((d \Delta)^{d (\ell-1)})$, we know that $\norone{\frac{1}{\ell}x^{(j)}} \geq d \cdot O((d \Delta)^{d (\ell-1)})$. Therefore, Claim 1 can be applied to  find $y^{(j)} \leq \frac{1}{\ell}x^{(j)}$ of smaller $1$-norm.

Note that $\ell$ is bounded by $\binom{|P|}{d} \leq |P|^d$ and $|P| \leq \Delta^d$ and we obtain that \begin{align*}
    \norone{y^{(j)}} \leq d^2 \Delta \ell \cdot O((d \Delta)^{d (\ell-1)}) = (d \Delta)^{O(d (\Delta^{d^2}))}.
\end{align*}
\end{proof}

\section{A Lower Bound for the Size of Graver Elements}
In this section we prove a lower bound on the size of Graver elements for a matrix where the overlapping parts contains only a single variable, i.e. $r=1$.

First, consider the matrix 
\begin{align*}
\mathcal{A} = 
\begin{pmatrix}
-1 & 2 & 0 & \cdots & 0\\
-1 & 0 & 3 & \ddots & \vdots\\
-1 & \vdots & \ddots & \ddots & 0\\
-1 & 0 & \cdots & 0 & \Delta\\
\end{pmatrix}.
\end{align*}
This matrix is of $2$-stage stochastic structure with $r=1$ and $s=1$.
We will argue that every element in $ker(\mathcal{A}) \cap (\mathbb{Z }^\Delta \setminus \{ 0 \})$ is large and therefore, the Graver elements of the matrix are large as well. We call the variable corresponding to the $i$-th column of the matrix variable $x_i$, where $x_1$ is the variable corresponding to he column with only $-1$ entries and the $x_i$ for $i>1$ is the variable corresponding to the column with entry $i$ in component $i$ and $0$ everywhere else.
Clearly, for $x \in \mathbb{Z}^{\Delta}$ to be in $ker(\mathcal{A}) \cap \ZZ^n$, we know by the first row of matrix $\mathcal{A}$ that $x_1$ has to be a multiple of $2$. By the second row of the matrix, we know that $x_1$ has to be a multiple of $3$ and so on.
Henceforth the variable $x_1$ has to be a multiple of all numbers $1, \ldots , \Delta$. Thus $x_1$ is a multiple of the least common multiple of numbers $1, \ldots , n$ which is divided by the product of all primes between $1, \ldots , n$. By known bounds for the product of all primes $\leq n$ \cite{erdos1989ramanujan}, this implies that the value of $x_1 \in 2^{\Omega(\Delta)}$, which shows that the the size of Graver elements of matrix $\mathcal{A}$ is in $(2^{\Omega(\Delta)})$.

The disadvantage in the matrix above is that the entries of the matrix are rather big. In the following we try to reduce the largest entry of the overall matrix by encoding each number $1, \ldots , \Delta$ into a submatrix.
For the encoding we use the matrix \begin{align*}
\mathcal{C} =
\begin{pmatrix}
\Delta & - 1 & 0 & \cdots & 0 \\
0 & \Delta & -1 & \ddots & \vdots\\
\vdots & \ddots & \ddots & \ddots & 0\\
0 & \cdots & 0 & \Delta & -1
\end{pmatrix},
\end{align*}
having $s$ rows and $s+1$ constraints.
Due to the first row of matrix $\mathcal{C}$, for a vector $x \in ker(\mathcal{C}) \cap \ZZ^{s+1}$ we know by the $i$-th row of the matrix that $x_i= x_{i-1} \cdot \Delta$. Hence $x_i = \Delta^{i-1} x_1$. Now we can encode in each number $z \in \{ 0, \ldots , \Delta^{s+1} -1 \}$ in an additional row by $z = \sum_{i=0}^{s} a_i(z) \Delta^i$, where $a_i(z)$ is the $i$-th number in a representation of $z$ in base $\Delta$. Hence, we consider the following matrix:
\begin{align*} \mathcal{A}' = 
    \begin{pmatrix}
    -1 & a_0(2) & \cdots & a_s (2) &  &  \\
    \cline{2-4}
    0 &\multicolumn{1}{|c}{} & \mathcal{C}&\multicolumn{1}{c|}{} & & \\
    \cline{2-4}
    -1 & & & &  a_0(3) & \cdots & a_s (3) \\
    \cline{5-7}
    0 &  &  & & \multicolumn{1}{|c}{} & \mathcal{C} & \multicolumn{1}{c|}{} & & \\
    \cline{5-7}
    \vdots & & & & & & & &  \ddots & \\
    \end{pmatrix}
\end{align*}
By the same argumentation as in matrix $\mathcal{A}$ above we know that $x_0$ has to be a multiple of each number $2, \ldots ,\Delta^{s+1}-1$. This implies that every non-zero integer vector of $ker(\mathcal{A}')$ has infinity-norm of at least $2^{\Omega(\Delta^{s})}$ is the number of rows of the block matrix. This shows the doubly exponential lower bound for the Graver complexity of $2$-stage stochastic IPs.

\section{Multi-Stage Stochastic IPs}
In this section we show that Lemma \ref{lem:subrepresentation} can also be used to get a bound on the Graver elements of matrices with a multi-stage stochastic structure. Multi-stage stochastic IPs are a well known generalization of $2$-stage stochastic IPs. For the stochastical programming background on multi-stage stochastic IPs we refer to \cite{romisch2001multistage}. Here we simply show how to solve the deterministic equivalent IP with a large constraint matrix. Regarding the augmentation framework of multi-stage stochastic IPs, it was previously known that a similar implicit bound than $2$-stage stochastic IPs also holds for multi-stage stochastic IPs. This was shown by Aschenbrenner and Hemmecke \cite{hemmecke_multi-stage} who built upon the bound of $2$-stage stochastic IPs.

In the following we define the shape of the constraint matrix $\mathcal{M}$ of a multi-stage stochastic IP. The constraint matrix consists of given block matrices $A^{(1)}, \ldots , A^{(\ell)}$ for some $\ell \in \ZZgeq$, where each block matrix $A^{(i)}$ uses a unique set of columns in $\mathcal{M}$. For a given block matrix, let $rows(A^{(i)})$ be the set of rows in $\mathcal{M}$ which are used by $A^{(i)}$.
A matrix $\mathcal{M}$ is multi-stage stochastic shape, if the following conditions are fulfilled:
\begin{itemize}
    \item There is a block matrix $A^{i_0}$ such that for every $1 \leq i \leq n$ we have $rows(A^{(i)}) \subseteq rows(A^{(i_0)})$.
    \item For two matrices $A^{(i)}, A^{(j)}$ either $rows(A^{(i)}) \subseteq rows(A^{(j)})$ or $rows(A^{(i)}) \cap rows(A^{(j)}) = \emptyset$ holds.
\end{itemize}
An example of a matrix of multi-stage stochastic structure is given in the following:
\begin{center}
\begin{tikzpicture}
\def\b{{\bullet}}
\matrix (m) [matrix of math nodes,
inner sep=3pt, column sep=3pt, row sep=2pt,
nodes={inner sep=0.25em,text width=2em,align=center},
left delimiter=(,right delimiter=),
]{%
    A^{(1)} & A^{(2)} \ & \ & A^{(4)} & \phantom{0} & \phantom{0} & \phantom{0} & \phantom{0}\\
	\phantom{0} & \phantom{0} & \phantom{0} & \ & A^{(5)} & \phantom{0} & \phantom{0} & \phantom{0} \\
	\phantom{0} & \phantom{0} & A^{(3)} & \phantom{0} & \phantom{0} & A^{(6)} & \phantom{0} & \phantom{0}\\
	\phantom{0} & \phantom{0} & \phantom{0} & \phantom{0} & \phantom{0} & \phantom{0} &A^{(7)} & \phantom{0}\\
	\phantom{0} & \phantom{0} & \phantom{0} & \phantom{0} & \phantom{0} & \phantom{0} & \phantom{0} & A^{(8)}\\
};%
\draw (m-1-1.north west) rectangle (m-5-1.south east);
\draw (m-1-2.north west) rectangle (m-2-2.south east);
\draw (m-3-3.north west) rectangle (m-5-3.south east);
\draw (m-1-4.north west) rectangle (m-1-4.south east);
\draw (m-2-5.north west) rectangle (m-2-5.south east);
\draw (m-3-6.north west) rectangle (m-3-6.south east);
\draw (m-4-7.north west) rectangle (m-4-7.south east);
\draw (m-5-8.north west) rectangle (m-5-8.south east);
\end{tikzpicture}
\end{center}

Intuitively, the constraint matrix is of multi-stage stochastic shape if the block matrices with the relation $\subseteq$ on the rows, forms a tree (see figure below).

\begin{center}
\begin{tikzpicture}
	\draw (0,0) node[below]{$A^{(4)}$} -- (0.4,1.3);
	\draw (1,0) node[below]{$A^{(5)}$} -- (0.6,1.3);
	\draw (2,0) node[below]{$A^{(6)}$} -- (2.9,1.3);
	\draw (3,0) node[below]{$A^{(7)}$} -- (3,1.3);
	\draw (4,0) node[below]{$A^{(8)}$} -- (3.1,1.3);
	\draw (0.5,1.5) node{$A^{(2)}$};
	\draw (0.6,1.7) -- (1.9,2.8);
	\draw (3,1.5) node{$A^{(3)}$};
	\draw (2.9,1.7) -- (2.1,2.8);
	\draw (2,3) node{$A^{(1)}$};
\end{tikzpicture}
\end{center}
Let $s_i$ be the number of columns that are used by block matrices in the $i$-th level of the tree (starting from level $0$ at the leaves). Here we assume that the number of columns of block matrices in the same level of the tree are all identical.
Let $r$ be the number of rows that are used by the block matrices that correspond to the leaves of the tree.
In the following Lemma we show that Lemma \ref{lem:subrepresentation} can be applied inductively to bound the size of an augmenting step of multi-stage stochastic IPs. The proof is similar to that of Theorem \ref{thm:2SSIP_bound}.

\begin{theorem} \label{thm:bound_multi-stage}
    Let $y$ be an indecomposable cycle of matrix $\mathcal{M}$ then $\norinf{y}$ is bounded by a function $T(s_1, \ldots , s_{t},r, \Delta)$, where $t$ is the depth of the tree. The functions $T$ involves a tower of $t+1$ exponentials and is recursively defined by
    \begin{align*}
        & T(r, \Delta) = (\Delta r)^{O(r)}\\
        & T(s_1, \ldots , s_i,r, \Delta) = 2^{(T(s_1, \ldots , s_{i-1},r, \Delta))^{O(s^{2}_i)}}.
    \end{align*}
\end{theorem}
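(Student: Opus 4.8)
The plan is to argue by induction on the depth $t$ of the tree, mimicking the proof of Theorem \ref{thm:2SSIP_bound} but now peeling off one level of the tree at a time. The base case $t=0$ is a single block matrix with $r$ rows whose entries are bounded by $\Delta$, so by Theorem \ref{thm:Graver_bound} any indecomposable cycle has one-norm at most $(2r\Delta+1)^r = (\Delta r)^{O(r)}$, which matches $T(r,\Delta)$. For the inductive step, consider the root block matrix $A^{(1)}$ of the tree and the $n$ subtrees $\mathcal{M}_1, \ldots, \mathcal{M}_n$ hanging off it (these correspond to the children $A^{(i)}$ of the root, each of which is itself the root of a multi-stage matrix of depth $t-1$). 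Given an indecomposable cycle $y$ of $\mathcal{M}$, restrict it to the columns and rows of the $j$-th subtree together with the shared root columns: this restriction $v^{(j)}$ lies in the kernel of the submatrix consisting of $\mathcal{M}_j$ plus the root columns restricted to $rows(\mathcal{M}_j)$, so it decomposes into a sign-compatible multiset $C_j$ of indecomposable elements of that submatrix. By the induction hypothesis each $c \in C_j$ has one-norm at most $T(s_1,\ldots,s_{t-1},r,\Delta)$.

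Next I would let $p$ be the projection onto the root columns (the columns shared by all subtrees), forming the multisets $p(C_1),\ldots,p(C_n)$ whose common total sum is $y^{(\text{root})}$, and whose elements have $\ell_\infty$-norm bounded by $\Delta_0 := T(s_1,\ldots,s_{t-1},r,\Delta)$, living in dimension $s_t$ (the number of root columns). Applying Lemma \ref{lem:subrepresentation} with $d = s_t$ and the size bound $\Delta_0$ yields nonempty submultisets $S_j \subseteq p(C_j)$ of cardinality at most $(s_t\Delta_0)^{O(s_t(\Delta_0^{\,s_t^2}))}$ with equal sums; lifting these back to $\bar C_j \subseteq C_j$ with $p(\bar C_j)=S_j$ produces a strictly smaller cycle $\bar y \le y$ of $\mathcal{M}$, contradicting indecomposability of $y$ unless $y$ itself was already that small. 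This gives $\norone{y^{(\text{root})}} \le (s_t\Delta_0)^{O(s_t(\Delta_0^{\,s_t^2}))}$, and then each subtree part is bounded by the cardinality of $\bar C_j$ times $\Delta_0$, i.e. again by an expression of the form $2^{(\Delta_0)^{O(s_t^2)}}$. Folding the constants, $T(s_1,\ldots,s_t,r,\Delta) = 2^{(T(s_1,\ldots,s_{t-1},r,\Delta))^{O(s_t^2)}}$, which is exactly the claimed recursion; unrolling it gives the tower of $t+1$ exponentials.

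The main obstacle I anticipate is the bookkeeping needed to verify that the restriction $v^{(j)}$ really decomposes inside the \emph{correct} submatrix and that sign-compatibility is preserved level by level, so that flipping signs to land in $\ZZgeq^{s_t}$ (as required to invoke Lemma \ref{lem:subrepresentation}) is legitimate — this is the place where the nesting structure of the tree, rather than the flat $2$-stage structure, actually has to be used. A secondary subtlety is ensuring that when we reassemble $\bar y$ from the $\bar C_j$ and the shared projection, the shared coordinates agree across all subtrees (which is guaranteed precisely because the $S_j$ have equal sums) and that the resulting $\bar y$ is nonzero and sign-compatible with $y$, so that it is a genuine witness to decomposability. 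Everything else is a routine propagation of the doubly-exponential bound through one more level of the recursion, exactly parallel to the computation in the proof of Theorem \ref{thm:2SSIP_bound}.
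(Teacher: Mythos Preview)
Your proposal is correct and follows essentially the same route as the paper: induction on the depth $t$, decomposing the restriction of $y$ to each child subtree (together with the restricted root columns) into indecomposable cycles of a depth-$(t-1)$ multi-stage matrix, projecting onto the $s_t$ root columns, and invoking Lemma~\ref{lem:subrepresentation} with $d=s_t$ and $\Delta_0 = T(s_1,\ldots,s_{t-1},r,\Delta)$ to extract equal-sum submultisets, then lifting back to build $\bar y\le y$. The obstacles you flag (sign-compatibility so one can flip into $\ZZgeq^{s_t}$, and agreement of the shared root coordinates across subtrees) are exactly the points the paper handles implicitly, and your final bound computation $2^{(\Delta_0)^{O(s_t^2)}}$ matches the paper's.
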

\begin{proof}
Consider a submatrix $\mathcal{A}$ of the constraint matirx $\mathcal{M}$ corresponding to a subtree of the tree with depth $t$. Hence, $\mathcal{A}$ itself is of multi-stage stochastic structure. Let submatrix $A \in \{A^{(1)}, \ldots , A^{(\ell)} \}$ be the root of the corresponding subtree of $\mathcal{A}$ and let the submatrices $B^{(1)}, \ldots , B^{(n)}$ be the submatrices corresponding to the subtrees of $A$ with $rows(B^{(i)}) \subseteq rows(A)$ for all $1 \leq i \leq n$.

Let $\bar{A}^{(i)}$ be the submatrix of $A$ which consists only of the rows that are used by $B^{(i)}$ (recall that $rows(B^{(i)}) \subseteq rows(A)$). Now suppose that $y$ is a cycle of $\mathcal{A}$, i.e. $\mathcal{A} y =0$ and let $y^{(0)}$ be the subvector of $y$ consisting only of the entries that belong to matrix $A$. Symmetrically let $y^{(i)}$ be the entries of vector $y$ that belong only to the matrix $B^{(i)}$ for $i>0$.  
Since $\mathcal{A}y = 0$ we also know that $\bar{A}^{(i)}y^{(0)} + B^{(i)}y^{(i)} = (\bar{A}^{(i)} B^{(i)}) \begin{pmatrix} y^{(0)} \\ y^{(i)} \end{pmatrix}= 0$ for every $1 \leq i \leq n$. 
Each vector $\begin{pmatrix} y^{(0)} \\ y^{(i)} \end{pmatrix}$ can be decomposed into a multiset of indecomposable cylces $C_i$ , i.e. 
\begin{align*} \begin{pmatrix} y^{(0)} \\ y^{(i)} \end{pmatrix}
   = \sum_{c \in C_i} c
\end{align*}
where each cycle $c \in C_i$ is a vector $c = \begin{pmatrix} c^{(0)} \\ c^{(i)} \end{pmatrix}$ consisting of subvector $c^{(0)}$ of entries that belong to matrix $A$ and a subvector $c^{(i)}$ of entries that belong to the matrix $B^{(i)}$.
Note that the matrix $(A^{(i)} B^{(i)})$ has a multi-stage stochastic structure with a corresponding tree of depth $t-1$.
Hence, by induction we can assume that each indecomposable cycle $c \in C_i$ is bounded by $\norinf{c} \leq T(s_1, \ldots , s_{t-1}, r)$ for all $1 \leq i \leq n$, where $T$ is a function that involves a tower of $t$ exponentials.
In the base case that $t=0$ and the matrix $\mathcal{A}$ only consists of one block matrix, we can bound $\norinf{c}$ by $(2 \Delta r+1)^r$ using Theorem \ref{thm:Graver_bound}.
Let $p$, be the projection that maps a cycle to the entries that belong the matrix $A$ i.e. $p(c) = p(\begin{pmatrix} c^{(0)} \\ c^{(i)} \end{pmatrix}) = c^{(0)}$.

For each vector $\begin{pmatrix} y^{(0)} \\ y^{(i)} \end{pmatrix}$ and its decomposition into cycles $C_i$ let $p(C_i) = \{ p(c) \mid c \in C_i \}$. Since
\begin{align*}
    y^{(0)} = \sum_{c \in C_1} p(c) = \ldots = \sum_{c \in C_n} p(c)
\end{align*}
we can apply Lemma \ref{lem:subrepresentation}, to obtain submultisets $S_i \subseteq p(C_i)$ of bounded size \begin{align*}
    |S_i| \leq (s_t T)^{O(s_t (T^{s_{t}^2}))}
\end{align*} with $T = T(s_1, \ldots , s_{t-1}, r, \Delta)$ such that $\sum_{x \in S_1}x = \ldots = \sum_{x \in S_n} x$. As $T(s_1, \ldots , s_{t-1}, r)$ is a function with $t$ exponentials, the bound on $|S_i|$ depends by a function $t+1$ exponentials.

There exist submultisets $\bar{C}_1 \subseteq C_1, \ldots , \bar{C}_n \subseteq C_n$ with $p(\bar{C}_1) = S_1, \ldots , p(\bar{C}_n) = S_n$. Hence, we can define the solution $\bar{y} \leq y$ by $\bar{y}^{(i)} = \sum_{c \in \bar{C}_i} \bar{p}(c)$ for every $i>0$, where $\bar{p}$ is the function that projects a vector to the entries that belong the matrix $B^{(i)}$ i.e. $\bar{p}(c) = \bar{p}(\begin{pmatrix} c^{(0)} \\ c^{(i)} \end{pmatrix}) = c^{(i)}$. For $i=0$ we define $y^{(0)} = \sum_{c \in \bar{C}_i} p(c)$. As the sum $\sum_{c \in \bar{C}_i} p(c)$ is identical for every $1 \leq i \leq n$, the vector $\bar{y}$ is a well defined.

Let $K$ be the constant derived from the $O$-notation of Lemma \ref{lem:subrepresentation} and $T = T(s_1, \ldots , s_{t-1}, r, \Delta)$, then the size of $\bar{y}$ can be bounded by
\begin{align*}
    \norinf{\bar{y}} \leq T \cdot \max_i |C_i| = T \cdot (s_t T)^{K s_t \cdot T^{(s^{2}_t)}} \leq 2^{K s_t \log(s_t T) \cdot T^{(s^{2}_t)}} \leq 2^{T^{O(s^{2}_t)}}.
\end{align*}
\end{proof}

\subsubsection*{Computing the Augmenting Step}
As a consequence for the bound of the Graver elements of the constraint matrix $\mathcal{M}$ of multi-stage stochastic IPs, we obtain by the augmentation framework an algorithm to solve multi-stage stochastic IPs. As explained in Section \ref{sec:two-stage}, the core difficulty is to compute the augmenting step $y \in Ker(\mathcal{M})$ such that $z + \lambda y$ is a feasible solution for a given initial feasible solution $z$ and a multiple $\lambda$.
Therefore, we have to solve the IP $\max \{c^T x \mid \mathcal{M}x = 0, \bar{\ell} \leq x \leq \bar{u} , \norinf{x} \leq T \}$ for some upper and lower bounds $\bar{l}, \bar{u}$ and constant $T = T(s_1, \ldots , s_t,r , \Delta)$ that is derived from the bound of Theorem \ref{thm:bound_multi-stage}.
This IP can be solved similar than in the case of $2$-stage stochastic IPs. However, since we have multiple layers, we have to apply the algorithm recursively. At each recursive call, we guess the value of the variables of the corresponding matrix and then apply the algorithm recursively. For further details on the algorithmic side and the running time we refer to \cite{hemmecke_multi-stage} or \cite{KouteckyLO18}.

As a final result we obtain the following theorem for multi-stage stochastic IPs:
\begin{theorem}
    A multi-stage stochastic IP with a constraint matrix $\mathcal{M}$ that corresponds to a tree of depth $t$ can be solved in time
    \begin{align*}
        n^2s_{0}^2 \varphi \log^2 (n s_0) \cdot T(s_1, \ldots , s_{t}, r, \Delta)
    \end{align*} where $\varphi$ is the encoding length of the IP and $T$ is a function depending only on parameters $s_1, \ldots , s_t,r, \Delta$ and involves a tower of $t+1$ exponentials.
\end{theorem}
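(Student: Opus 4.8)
The plan is to instantiate the Graver augmentation framework of \cite{loera2012algebraic, eisenbrand18, KouteckyLO18} with the new bound from Theorem~\ref{thm:bound_multi-stage}. Write $T := T(s_1, \ldots, s_t, r, \Delta)$ for the bound on $\norinf{y}$ of an indecomposable cycle of $\mathcal{M}$. First I would reduce the optimization problem to the task of repeatedly computing, for a current feasible solution $z$ and a guessed step length $\lambda \in \ZZgeq$, a Graver element $y \in \ker(\mathcal{M})$ such that $z + \lambda y$ stays feasible and $\lambda c^T y$ is maximal among all Graver elements; by Theorem~\ref{thm:bound_multi-stage} every such $y$ satisfies $\norinf{y} \le T$, so it suffices to solve the bounded IP $\max\{ c^T x \mid \mathcal{M}x = 0,\ \bar{\ell} \le x \le \bar{u},\ \norinf{x} \le T \}$ for the modified bounds $\bar{\ell}, \bar{u}$ induced by $z$ and $\lambda$ (as in Section~\ref{sec:two-stage}, the box constraint may shrink the feasible set but never discards an optimal Graver augmenting step).

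Then I would solve this bounded IP by recursing along the block tree of $\mathcal{M}$, mirroring the two-stage procedure. At a node whose block matrix $A$ sits at level $i$ and has $s_i$ columns, one enumerates all candidate assignments $y^{(0)}$ to the variables of $A$ with $\norinf{y^{(0)}} \le T$; there are at most $(2T+1)^{s_i}$ of them. For each such assignment the problem separates over the subtrees $B^{(1)}, \ldots, B^{(n)}$ hanging below $A$: each $B^{(j)}$ is a multi-stage stochastic constraint matrix of depth one smaller, to be solved recursively with right-hand side $-\bar{A}^{(j)} y^{(0)}$ and the inherited bounds, and the objectives are summed; one returns the assignment of maximum total objective. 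The base case (a single block with $r$ rows) is solved by the algorithm of Eisenbrand and Weismantel \cite{eisenbrand2018proximity} in time $\Delta^{O(r^2)}$. Unwinding the recursion, the branching factors at the $t$ levels multiply to at most $\prod_{i=1}^{t} (2T(s_1,\ldots,s_i,r,\Delta)+1)^{s_i}$, which by the recursive definition of $T$ is again bounded by a function of $s_1, \ldots, s_t, r, \Delta$ with a tower of $t+1$ exponentials and is absorbed into $T(s_1, \ldots, s_t, r, \Delta)$; the only genuinely instance-dependent blow-up is the number of recursive calls, which is polynomial in $n$ and in the number of leaf columns $s_0$. Hence one best augmenting step is computed in time $\mathrm{poly}(n, s_0) \cdot T(s_1, \ldots, s_t, r, \Delta)$.

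It remains to bound the number of augmentation iterations. Here I would invoke the standard Graver-basis analysis: the difference $z - z^*$ between the current point and an optimum decomposes into at most $2n'$ sign-compatible Graver elements, where $n'$ is of order $n s_0$ (the number of block rows of $\mathcal{M}$) \cite{cook1986integer}, so the best augmenting step shrinks the gap $c^T z^* - c^T z$ by a factor of $1 - \tfrac{1}{2n'}$; combining this with the usual bit-scaling / proximity argument and the bound $2^{O(\varphi)}$ on the initial gap yields $O(n^2 s_0^2\, \varphi \log^2(n s_0))$ iterations, exactly as in \cite{eisenbrand18, KouteckyLO18, hemmecke_multi-stage}. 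An initial feasible solution is obtained by solving an auxiliary multi-stage stochastic IP of the same shape with slack variables, at the same asymptotic cost. Multiplying the per-iteration cost by the iteration count gives the claimed running time $n^2 s_0^2 \varphi \log^2(n s_0) \cdot T(s_1, \ldots, s_t, r, \Delta)$.

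I expect the main obstacle to be the bookkeeping in the second paragraph: verifying that the product of the per-level branching factors telescopes into (a constant multiple of the exponent hidden inside) $T(s_1, \ldots, s_t, r, \Delta)$, so that the entire tower-of-exponentials dependence is confined to that function while the remaining factors stay polynomial in $n$, $s_0$ and $\varphi$. The iteration-count bound and the construction of an initial feasible solution are routine given the cited results.
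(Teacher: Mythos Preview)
Your proposal is correct and follows essentially the same approach as the paper: the paper's argument for this theorem is only a short sketch in the ``Computing the Augmenting Step'' paragraph preceding the statement, which says to plug the bound of Theorem~\ref{thm:bound_multi-stage} into the augmentation framework, solve the bounded best-step IP by recursively guessing the variables of the root block and separating over the subtrees, and defer the remaining running-time bookkeeping to \cite{hemmecke_multi-stage, KouteckyLO18}. Your write-up simply spells out these same steps in more detail (the per-level enumeration, the Eisenbrand--Weismantel base case, the iteration count via the Cook et al.\ decomposition, and the auxiliary IP for an initial feasible solution), and your identified ``main obstacle'' about absorbing the product of branching factors into $T$ is precisely the part the paper handles by citation rather than by an explicit calculation.
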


\bibliographystyle{abbrv}
\bibliography{library}

\begin{thebibliography}{10}

\bibitem{twostage_branch_and_bound}
S.~Ahmed, M.~Tawarmalani, and N.~V. Sahinidis.
\newblock A finite branch-and-bound algorithm for two-stage stochastic integer
  programs.
\newblock {\em Mathematical Programming}, 100(2):355--377, Jun 2004.

\bibitem{hemmecke_multi-stage}
M.~Aschenbrenner and R.~Hemmecke.
\newblock Finiteness theorems in stochastic integer programming.
\newblock {\em Foundations of Computational Mathematics}, 7(2):183--227, 2007.

\bibitem{birge2011introduction}
J.~R. Birge and F.~Louveaux.
\newblock {\em Introduction to stochastic programming}.
\newblock Springer Science \& Business Media, 2011.

\bibitem{caroe1998shaped}
C.~C. Car{\o}e and J.~Tind.
\newblock L-shaped decomposition of two-stage stochastic programs with integer
  recourse.
\newblock {\em Mathematical Programming}, 83(1-3):451--464, 1998.

\bibitem{chen2018Graver}
L.~Chen, L.~Xu, and W.~Shi.
\newblock On the graver basis of block-structured integer programming.
\newblock {\em arXiv preprint arXiv:1805.03741}, 2018.

\bibitem{cook1986integer}
W.~Cook, J.~Fonlupt, and A.~Schrijver.
\newblock An integer analogue of caratheodory's theorem.
\newblock {\em Journal of Combinatorial Theory, Series B}, 40(1):63--70, 1986.

\bibitem{loera2012algebraic}
J.~A. De~Loera, R.~Hemmecke, and M.~K{\"o}ppe.
\newblock {\em Algebraic and geometric ideas in the theory of discrete
  optimization}.
\newblock SIAM, 2012.

\bibitem{dempster1981analytical}
M.~A.~H. Dempster, M.~Fisher, L.~Jansen, B.~Lageweg, J.~K. Lenstra, and
  A.~Rinnooy~Kan.
\newblock Analytical evaluation of hierarchical planning systems.
\newblock {\em Operations Research}, 29(4):707--716, 1981.

\bibitem{eisenbrand18}
F.~Eisenbrand, C.~Hunkenschr{\"{o}}der, and K.~Klein.
\newblock Faster algorithms for integer programs with block structure.
\newblock In {\em 45th International Colloquium on Automata, Languages, and
  Programming, {ICALP} 2018, July 9-13, 2018, Prague, Czech Republic}, pages
  49:1--49:13, 2018.

\bibitem{eisenbrand2018proximity}
F.~Eisenbrand and R.~Weismantel.
\newblock Proximity results and faster algorithms for integer programming using
  the {S}teinitz lemma.
\newblock In {\em Proceedings of the Twenty-Ninth Annual ACM-SIAM Symposium on
  Discrete Algorithms}, pages 808--816. SIAM, 2018.

\bibitem{erdos1989ramanujan}
P.~Erd{\"o}s.
\newblock Ramanujan and i.
\newblock In {\em Number Theory, Madras 1987}, pages 1--17. Springer, 1989.

\bibitem{koutecky2018IJCAI}
P.~Faliszewski, R.~Gonen, M.~Kouteck{\'{y}}, and N.~Talmon.
\newblock Opinion diffusion and campaigning on society graphs.
\newblock In {\em Proceedings of the Twenty-Seventh International Joint
  Conference on Artificial Intelligence, {IJCAI} 2018, July 13-19, 2018,
  Stockholm, Sweden.}, pages 219--225, 2018.

\bibitem{gade2014decomposition}
D.~Gade, S.~K{\"u}{\c{c}}{\"u}kyavuz, and S.~Sen.
\newblock Decomposition algorithms with parametric gomory cuts for two-stage
  stochastic integer programs.
\newblock {\em Mathematical Programming}, 144(1-2):39--64, 2014.

\bibitem{Graver1975foundations}
J.~E. Graver.
\newblock On the foundations of linear and integer linear programming i.
\newblock {\em Mathematical Programming}, 9(1):207--226, 1975.

\bibitem{grinberg1980value}
V.~S. Grinberg and S.~V. Sevast'yanov.
\newblock Value of the {S}teinitz constant.
\newblock {\em Functional Analysis and Its Applications}, 14(2):125--126, 1980.

\bibitem{haneveld2001optimizing}
W.~K.~K. Haneveld and M.~H. van~der Vlerk.
\newblock Optimizing electricity distribution using two-stage integer recourse
  models.
\newblock In {\em Stochastic Optimization: Algorithms and Applications}, pages
  137--154. Springer, 2001.

\bibitem{Hemmecke_two_stage_03}
R.~Hemmecke and R.~Schultz.
\newblock Decomposition of test sets in stochastic integer programming.
\newblock {\em Math. Program.}, 94(2-3):323--341, 2003.

\bibitem{jansen2018scheduling}
K.~Jansen, K.~Klein, M.~Maack, and M.~Rau.
\newblock Empowering the configuration-ip - new {PTAS} results for scheduling
  with setups times.
\newblock {\em CoRR}, abs/1801.06460, 2018.

\bibitem{jansen2018near_linear}
K.~Jansen, A.~Lassota, and L.~Rohwedder.
\newblock Near-linear time algorithm for n-fold ilps via color coding.
\newblock {\em arXiv preprint arXiv:1811.00950}, 2018.

\bibitem{kall1994stochastic}
P.~Kall and S.~W. Wallace.
\newblock {\em Stochastic programming}.
\newblock Springer, 1994.

\bibitem{kannan1987minkowski}
R.~Kannan.
\newblock Minkowski's convex body theorem and integer programming.
\newblock {\em Mathematics of operations research}, 12(3):415--440, 1987.

\bibitem{knop2017scheduling}
D.~Knop and M.~Kouteck{\'{y}}.
\newblock Scheduling meets n-fold integer programming.
\newblock {\em J. Scheduling}, 21(5):493--503, 2018.

\bibitem{knop2017combinatorial}
D.~Knop, M.~Kouteck\'y, and M.~Mnich.
\newblock {Combinatorial n-fold Integer Programming and Applications}.
\newblock In K.~Pruhs and C.~Sohler, editors, {\em 25th Annual European
  Symposium on Algorithms (ESA 2017)}, volume~87 of {\em Leibniz International
  Proceedings in Informatics (LIPIcs)}, pages 54:1--54:14, Dagstuhl, Germany,
  2017. Schloss Dagstuhl--Leibniz-Zentrum fuer Informatik.

\bibitem{KnopKM17-bribery}
D.~Knop, M.~Kouteck{\'{y}}, and M.~Mnich.
\newblock Voting and bribing in single-exponential time.
\newblock In {\em 34th Symposium on Theoretical Aspects of Computer Science,
  {STACS} 2017, March 8-11, 2017, Hannover, Germany}, pages 46:1--46:14, 2017.

\bibitem{knop2018tight}
D.~Knop, M.~Pilipczuk, and M.~Wrochna.
\newblock Tight complexity lower bounds for integer linear programming with few
  constraints.
\newblock {\em arXiv preprint arXiv:1811.01296}, 2018.

\bibitem{KouteckyLO18}
M.~Kouteck{\'{y}}, A.~Levin, and S.~Onn.
\newblock A parameterized strongly polynomial algorithm for block structured
  integer programs.
\newblock In {\em 45th International Colloquium on Automata, Languages, and
  Programming, {ICALP} 2018, July 9-13, 2018, Prague, Czech Republic}, pages
  85:1--85:14, 2018.

\bibitem{kuccukyavuz2017introduction}
S.~K{\"u}{\c{c}}{\"u}kyavuz and S.~Sen.
\newblock An introduction to two-stage stochastic mixed-integer programming.
\newblock In {\em Leading Developments from INFORMS Communities}, pages 1--27.
  INFORMS, 2017.

\bibitem{ackermanian2009}
F.~Pelupessy and A.~Weiermann.
\newblock Ackermannian lower bounds for lengths of bad sequences of monomial
  ideals over polynomial rings in two variables.
\newblock {\em Mathematical Theory and Computational Practice}, page 276, 2009.

\bibitem{romisch2001multistage}
W.~R{\"o}misch and R.~Schultz.
\newblock Multistage stochastic integer programs: An introduction.
\newblock In {\em Online optimization of large scale systems}, pages 581--600.
  Springer, 2001.

\bibitem{schrijver1998theory}
A.~Schrijver.
\newblock {\em Theory of linear and integer programming}.
\newblock John Wiley \& Sons, 1998.

\bibitem{twostage_survey}
R.~Schultz, L.~Stougie, and M.~H. Vlerk.
\newblock Two-stage stochastic integer programming: a survey.
\newblock {\em Statistica Neerlandica}, 50(3):404--416.

\bibitem{steinitz1913bedingt}
E.~Steinitz.
\newblock Bedingt konvergente reihen und konvexe systeme.
\newblock {\em Journal f{\"u}r die reine und angewandte Mathematik},
  143:128--176, 1913.

\bibitem{zhang2014finitely}
M.~Zhang and S.~K{\"u}{\c{c}}{\"u}kyavuzvuz.
\newblock Finitely convergent decomposition algorithms for two-stage stochastic
  pure integer programs.
\newblock {\em SIAM Journal on Optimization}, 24(4):1933--1951, 2014.

\end{thebibliography}

\end{document}